\documentclass[12pt,journal]{IEEEtran}

\usepackage[hmargin=1.2in,vmargin=1.2in]{geometry}
\usepackage{amsmath,amsthm,amssymb,amsfonts,verbatim}
\usepackage[mathscr]{eucal}
\usepackage{graphicx}
\usepackage[colorlinks, linkcolor=blue,  anchorcolor=blue,
citecolor=blue
]{hyperref}
\usepackage[hmargin=1.2in,vmargin=1.2in]{geometry}
\usepackage{booktabs}

\usepackage{xcolor}
\usepackage{hyperref}
\usepackage{enumitem}

\hypersetup{pdfborder = 0 0 1, colorlinks = true, linkcolor = 
blue
}
\setlist[enumerate, 1]{ label =  \textup{ (\arabic*) } }
\setlist[enumerate, 2]{ label =  \textup{ (\roman*) } }

\DeclareMathOperator{\supp}{Supp}
\DeclareMathOperator{\Div}{Div}

\DeclareMathOperator{\Gal}{Gal}
\DeclareMathOperator{\Aut}{Aut}
\DeclareMathOperator{\Diff}{Diff}
\DeclareMathOperator{\Supp}{Supp}

\DeclareMathOperator{\LC}{LC}

\renewcommand{\leq}\leqslant
\renewcommand{\geq}\geqslant
\renewcommand{\le}{\leq}
\renewcommand{\ge}{\geq}
\renewcommand{\epsilon}{\varepsilon}

\providecommand{\mcal}{\mathcal}
\providecommand{\mscr}{\mathscr}
\providecommand{\mrm}{\mathrm}
\providecommand{\msf}{\mathsf}

\providecommand{\mbb}{\mathbb}
\providecommand{\mbf}{\mathbf}
\providecommand{\bm}{\boldsymbol}

\providecommand{\Z}{{\mathbb Z}}

\providecommand{\R}{{\mathbb R}}
\providecommand{\F}{{\mathbb F}}
\providecommand{\N}{{\mathbb N}}

\providecommand{\bbp}{{\mathbb P}}

\providecommand{\igen}[1]{{\langle #1 \rangle }}

\providecommand{\abs}[1]{\ensuremath{ |#1| }}

\providecommand{\gauss}[1]{ \ensuremath {\left\lfloor #1 \right\rfloor} }
\providecommand{\ceiling}[1]{\ensuremath{ \left\lceil #1 \right\rceil } }

\providecommand{\trans}{\ensuremath{^{\msf T}}}

\newtheorem{lem}{Lemma}[section]
\newtheorem{thm}[lem]{Theorem}
\newtheorem{cor}[lem]{Corollary}

\newtheorem{prop}[lem]{Proposition}

\newtheorem{rmk}[lem]{Remark}

\providecommand{\thmref}[1]{{\hyperref[#1]{Theorem \ref{#1}}}}
\providecommand{\propref}[1]{{\hyperref[#1]{Proposition \ref{#1}}}}
\providecommand{\coref}[1]{{\hyperref[#1]{Corollary \ref{#1}}}}
\providecommand{\lemmaref}[1]{{\hyperref[#1]{Lemma \ref{#1}}}}
\providecommand{\secref}[1]{{\hyperref[#1] {Section \ref{#1}}}}

\providecommand\GK{{Giulietti--Korchm{\'a}ros\ }}
\providecommand{\QC}{quasi-cyclic\ }

\title{Multi-sequences with large linear and error linear complexity from function fields}
\author{Xubin Hu,
	Shu Liu, 
	Liming Ma
	and Chaoping Xing
	\thanks{X. Hu is with the School of Mathematical Sciences, University of Science and Technology of China, Hefei 230026, China (e-mail: xbh5168@mail.ustc.edu.cn).}
	\thanks{S. Liu is with the National Key Laboratory of Science and Technology on Communications, University of Electronic Science and Technology of China, Chengdu 611731, China.
 (E-mail: shuliu@uestc.edu.cn.)}
	\thanks{L. Ma is with the School of Mathematical Sciences, University of Science and Technology of China, Hefei 230026, China (e-mail: lmma20@ustc.edu.cn).}
	\thanks{C. Xing is with School of Electronic Information and Electrical Engineering, Shanghai Jiao Tong University, Shanghai 200240, China (email: xingcp@sjtu.edu.cn).}
	
}

\onecolumn

\begin{document}
	\maketitle
		\begin{abstract}
			The linear complexity and the error linear complexity of multi-sequences are measures for security in stream ciphers. In this manuscript, we present a general framework for constructing periodic multi-sequences via function fields. We prove that the constructed multi-sequences possess both large linear complexity and large error linear complexity. We apply this framework of constructing multi-sequences to various maximal function fields and we obtain many new multi-sequences with various lengths and dimensions. As a byproduct, we adopt this idea and produce many new quasi-cyclic algebraic geometric codes as well.
		\end{abstract}
		
		{\bfseries Index Terms---Multi-sequences, linear complexity, error linear complexity, automorphisms, function fields}
%
	
	
\section{Introduction}
		For stream cipher, many cryptography schemes require generating long pseudorandom sequences over finite fields. For security reasons, it is favorable to use sequences with high unpredictability and high statistical randomness. The unpredictability can be assessed by various complexity measures, and the simplest one of them is the linear complexity. Meanwhile, the stability of linear complexity is also important, which can be measured by the $\epsilon$-error linear complexity introduced in \cite{DXS1991,SM1993}. As the vectorized stream cipher develops, it is necessary to adapt the two complexity measures above to suit for finitely many parallel sequences over finite fields, or multi-sequences (see \cite{NiedSurv} or below for definitions).

        Related works thrived in past three decades. It starts from theoretical researches on linear and error linear complexity of sequences. In \cite{DXS1991}, it was conjectured that there may exist trade-offs between linear complexity and the error complexity of periodic sequences. This conjecture was disproven by Niederreiter in 2003 \cite{Nied2003}. Niederreiter constructed $N$-periodic sequence with linear complexity $N$ and $\epsilon$-linear complexity $N-1$ by using cyclotomic cosets. Meidl and Niederreiter determined the expected value of both complexity measures for periodic sequences via discrete Fourier transforms in 2002 \cite{MN2002}, and for periodic multi-sequences \cite{MN2003}. A probabilistic theory was developed by Niederrieter in 2006 \cite{Nied2006}. Furthermore, Meidl et al.~developed a theory on error linear complexity of multi-sequences in 2007 \cite{MNV2007}. This theory was further refined in \cite{NV2008}, and the upper bounds of several error linear complexity measure were obtained.    These are typical examples in theoretical researches.
		
		On the other hand, people also studied linear complexity and error linear complexity of explicit sequences and multi-sequences. For example, Aly and Winterhof determined the exact value of error linear complexity of the Legendre sequences over a prime finite field \cite{AW2006}. In 2007, the linear complexity of $d$-ary Sidel'nikov sequence over a prime finite field was studied by Aly and Meidl in \cite{AM2007}. Aly et al.\ also studied the cyclotomic sequence in the aspect of error linear complexity in \cite{AMW2007}. 

Especially, the theory of function fields plays an important role for studying and constructing sequences and multi-sequences with high linear complexity and high $\epsilon$-error linear complexity. 
        Xing et al.\ constructed multi-sequences over $\F_2$ via the elliptic function fields with high linear complexity and low correlation \cite{XKD2003}, which can be refined and generalized in \cite{JMXZ2025,LZF25}. Later in 2007, Hu et al.~generalized this construction to $p$-ary sequences, and studied their periods, linear complexities, aperiodic correlations, and other properties \cite{HHF2007}.  In 2009, Xing and Ding constructed multi-sequences over $\F_{q^2}$ via cyclic automorphism subgroups of the Hermitian function fields \cite{XD2009}. They obtained three families of multi-sequences with both high linear complexity and high error linear complexity. Inspired by this work, periodic and aperiodic sequences via rational and Hermitian function fields were also constructed   \cite{ZHD2020}, and  periodic multi-sequences via a tower of Artin--Schreier extension of function fields were constructed in \cite{Tong2012}.
		
		Despite these work on constructing multi-sequences, the choice of parameters and complexity measures are still limited. It is necessary to construct multi-sequences that exhibit high complexity measures and provide flexibility on the choice of parameters such as period, dimension and etc.

        \subsection{Main Results and Techniques}

        In this paper, we provide a general framework for constructing periodic multi-sequences with high linear complexity and high $\epsilon$-error linear complexity. This framework is mainly extracted from the work of Xing and Ding \cite{XD2009}. We apply this framework to several well-known maximal function fields and obtain new multi-sequences with favorable parameters related to linear complexity and error linear complexity.

        Now we summarize the main ideas of our framework of constructing multi-sequences via function fields. We prefer to use the language of algebraic function fields. Let $F$ be an algebraic function field defined over the finite field $\F_q$.
        Each entry of the multi-sequence is defined in the form of the evaluation $f(P)$, where $P$ is a rational place of $F$ and $f$ is a function from the valuation ring of $P$. For an easier calculation of linear complexity, we want the rational places for evaluation in the same sequence to be related by a single automorphism $\sigma$ of $F/ \F_q$. Then it is natural to require most of the rational places of $F$ to be divided into long orbits under the action of cyclic group $\langle \sigma \rangle$. Equivalently, we require the existence of many rational places that split completely in the extension $F/ F^{\igen \sigma}$. Let $R$ be one of the rational place of $F$ that splits from these places of $F ^{\igen \sigma}$. The estimate of linear complexity will be converted to count zeros and poles of a certain function $\widetilde f$ derived from the linear recurrence relation that generates each sequence. We choose for each sequence the function $f$ from a Riemann--Roch space $\mcal L(G)$ with $G$ fixed by $\sigma$, and we require the support $\supp(G)$ of $G$ to avoid all evaluation places. To prevent the vanishing of $\widetilde f$, we may require $f$ has a simple pole at $R$. Therefore we choose $f$ from the set $\mcal L(G+R) \setminus \mcal L(G)$ (which is nonempty provided that $\deg (G)$ is sufficiently large). Under an additional assumption on the degree of divisors, we can prove that the obtained multi-sequence possesses both large linear complexity and large $\epsilon$-error linear complexity for relatively small positive integers $\epsilon$.

        This idea of constructing multi-sequences via function fields can be migrated to the construction of other objects that can relate to a cyclic structure. Our attempt is the construction of quasi-cyclic algebraic geometric codes. We only need to choose the Riemann--Roch space $\mcal L(G)$ as the space of functions for the algebraic geometric codes associated with the divisor $G$ and the evaluation places. Such an idea is already adopted by Tong and Ding in 2013 \cite{TD2013} and Bonini et al.~in a most recent paper \cite{BDG2026}. We apply this idea to other maximal function fields that meet our requirements stated as above and obtain many quasi-cyclic algebraic geometry codes.

        \subsection{Organization of this paper}

		This paper is organized as follows. We prepare ourselves with necessary tools from the theory of algebraic function fields in \secref{sec:prelim}, and we simultaneously recall the concept of linear complexity and error linear complexity of multi-sequences. In \secref{sec:general_framework}, we propose the general framework for constructing multi-sequences from function fields, and provide the lower bounds of $\epsilon$-error linear complexity for relatively small integer $\epsilon$.  In  \secref{sec:instances}, we apply this framework to cyclotomic function fields, maximal function fields of second largest genus, Suzuki function fields and \GK function fields to provide new multi-sequences with large linear complexity and $\epsilon$-error linear complexity. Finally, in \secref{sec:QC_codes} we exploit the similarity of periodic multi-sequences and quasi-cyclic codes to produce many new  \QC algebraic geometric codes.

	\section{Preliminaries}
	
	\label{sec:prelim}
	In this section, we collect necessary knowledge for the construction of multi-sequences from function fields. We present preliminaries and basic results of function fields, extension theory of function fields and multi-sequences with their linear complexity and $\epsilon$-error linear complexity. Details for function fields can be found in \cite{GTM254, NXBook}. The reader may refer to \cite{NiedSurv} for more information on linear complexity and $\epsilon$-error linear complexity of multi-sequences.
	\subsection{Algebraic function fields}
	
	Let $q$ be a power of a nonnegative prime integer $p$, and $\F_q$ be the finite field with exactly $q$ elements. An algebraic function field $F$ over $\F_q$ is a finite algebraic extension of the rational function field over $\F_q$. We also assume that $\F_q$ is algebraically closed in $F$, and we usually write $F / \F_q$.  The set of all places of $F$ is denoted by $\mbb P_F$. Let $P$ be a place of $F$, and $\mcal O_P$ be its corresponding valuation ring. For $z \in \mcal O_P$, $z(P)$ is defined to be the residue class of $z$ modulo $P$ in $\kappa_P:= \mcal O_P / P$. The degree $\deg(P)$ of the place $P$ is the extension degree $[\kappa_P : \F_q]$. The places of $F$ of degree $1$ are called rational. They form a subset $\mbb P_F^{(1)}$ of $\mbb P_F$.
	
	A divisor of $F$ is a formal sum $D = \sum _{P \in \mbb P_F} n_P P$ with integer coefficients $n_P \in \Z$, where $n_P \neq 0$ for at most finitely many places $P$. The degree of $G$ is defined to be $\deg (G) := \sum n_P \deg (P)$. The divisor $D$ is called effective if all coefficients $n_P \geq 0$, and we write $D \geq 0$. The support of the  divisor $D$ is the set
	\[
	\Supp (D) = \{P \in \mbb P_F \colon n_P \neq  0\}.
	\] The divisors of $F$ form an abelian group $\Div (F)$. For each place $P$, let $v_P$ be the corresponding normalized discrete valuation. For a nonzero element $z \in F$, a place $P$ is a zero of $z$ if $z \in P$, and it is a pole of $z$ if $z^{-1} \in P$. The zero divisor of $z$ is the formal sum
	\[
	(z)_0 = \sum _{P\colon z \in \mcal O_P} v_P (z) P,
	\]
	and the pole divisor is $(z)_\infty := (1/z)_0$. The principal divisor of $z$ is just
	\[
	(z) :=\sum _{P \in \mathbb{P}_F} v_P (z) P= (z)_0 - (z)_\infty.
	\]

	For a divisor $G$ of $F$, the Riemann--Roch space associated with $G$ is defined as
	\begin{align*}
		\mcal L(G)\,\, {:=}\,\,  \{ z \in F^\ast \colon (z) +G \geq 0 \}\cup \{0\}.
	\end{align*}
	It is a finite-dimensional vector space over $\F_q$ of dimension $\dim (G)$. The celebrated Riemann--Roch Theorem shows that the integer
	$
	\deg (G) - \dim (G) + 1
	$
	is bounded from above for all divisors $G$. The lowest upper bound $g(F)$ is called the genus of $F$. The genus $g(F)$ is nonnegative, and the equation
	\[
	\dim (G) = \deg (G) -g (F)  + 1
	\]
	holds true for any divisor $G$ with $ \deg(G)\geq 2g(F) - 1$ \cite[Theorem 1.5.17]{GTM254}.
	
	For a function field $F/\F_q$, an $\F_q$-automorphism of $F$ is an automorphism $F \to F$ that fixes all elements in $\F_q$. They form a group $\Aut (F/ \F_q)$, which is called the automorphism group of $F$ over $\F_q$. For any $\sigma \in \Aut (F / \F_q)$ and a place $P$ of $F$, the set $\sigma (P) = \{\sigma (z) \colon z  \in P\}$ is also a place of $F$ with the same degree as $P$. This defines an action on $\mbb P_F$ by the automorphism group $\Aut (F/ \F_q)$. This action extends $\Z$-linearly to $\Div (F)$.  From \cite[Lemma 1]{JMXZ2025}, we can see that for each divisor $D$, $\sigma (\mcal L(D)) = \mcal L(\sigma (D))$. In particular if $\sigma (P) =P$, then  $ \sigma (\mcal L(r P)) = \mcal L(rP) $ for any $r \in \N$.

    \subsection{Extension theory of function fields}
	Let $E / \F_q$ be a finite Galois algebraic extension of $F$. The Hurwitz genus formula \cite[Theorem 3.4.13]{GTM254} states that
    \[
	2 g(E) - 2 = [E :F](2 g (F) - 2) + \deg (\Diff (E /F)),
	\]
    where $ \Diff (E/F)$ is the different divisor of $E / F $. For $P \in \bbp _F$ and $ Q \in \bbp_E$ such that $ Q | P$, let $ d(Q|P), e(Q|P), f(Q|P)$ be the different exponent, ramification index and relative degree of $Q | P$, respectively. Then $ \Diff (E / F)$ is the divisor $ \sum _{Q \in \bbp  _E} d(Q|Q \cap F) Q $. If $p \nmid e (Q|P)$, then $ d(Q|P)= e(Q|P) - 1$ by the Dedekind different theorem \cite[Theorem 3.5.1]{GTM254}.

    Let $P \in \mbb P_F$ and $Q \in \mbb P_E$ with $Q |P$.  The decomposition group is $G_Z(Q | P) = \{\sigma \in \Gal (E/F) \colon \sigma (Q) = Q \} $. The subfield of $E$ fixed by $G_Z(Q|P)$ is called the decomposition field of $Q|P$. There exists a surjective homomorphism $\varphi \colon  G_Z (Q|P) \to \Gal (\kappa_Q / \kappa_P), \sigma \mapsto \overline \sigma$, where $\overline \sigma ( z(Q) ) := (\sigma z) (Q) $ for each $z \in \mcal O_Q$. Now assume $Q|P$ is unramified. By \cite[Proposition 1.4.6]{NXBook}, the homomorphism $\varphi$ is an isomorphism.
    Since $\kappa_Q / \kappa _P$ is an extension of finite fields, the group $\Gal (\kappa_Q / \kappa_P)$ is cyclic with a generator. The preimage of this generator under the isomorphism $\varphi$ is called the Frobeinus symbol of $Q$ over $P$. By \cite[Theorem 1.4.11]{NXBook}, this symbol is independent of $Q$ when $E / F$ is an abelian extension. In this case this symbol is called the Artin symbol $\left[ \frac {E /F} P \right]$ of $P$ in $E/F$. The following lemma given by {\cite[Proposition 1.4.12]{NXBook}} is useful to determine whether an unramified place splits completely or not.
	\begin{lem}
		Let $E/F$ be a finite abelian extension of function fields, and let $M$ be an intermediate field of $E/F$. Let $P$ be an place of $F$ that is unramified in $E/F$. Then $P$ splits completely in $M/F$ if and only if the Artin symbol $\left[\frac {E/F} P\right] \in \Gal (E/M)$.  \label{lem:crit_tot_split}
	\end{lem}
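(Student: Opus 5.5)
To prove \lemmaref{lem:crit_tot_split}, I would reduce everything to the Galois correspondence for decomposition groups in the abelian extension $E/F$. Put $H := \Gal(E/M)$ and $Z := G_Z(Q|P)$ for a place $Q$ of $E$ lying over $P$. Since $E/F$ is abelian, every subgroup is normal. The conjugates $\tau Z \tau^{-1}$, which are the decomposition groups of the other places over $P$, therefore all coincide with $Z$. So $Z$ depends only on $P$, and so does the Artin symbol $\left[\frac{E/F}{P}\right]$ (as recalled before the lemma).

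\textbf{Step 1: splitting in $M$ via decomposition groups.} Since $P$ is unramified in $E/F$, it is unramified in $M/F$, so we only need to track residue degrees. I will use the standard fact that the decomposition field $E^{Z}$ is the largest intermediate field in which $P$ has a rational-over-$P$ place below $Q$ with trivial residue extension and ramification. For abelian extensions this is the maximal subfield in which $P$ splits completely. Hence $P$ splits completely in $M/F$ if and only if $M \subseteq E^{Z}$, i.e.\ if and only if $Z \subseteq H$.

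To make this concrete, I would check directly that the place $Q \cap M$ has residue degree $f(Q\cap M|P) = [Z : Z\cap H]$. The places of $M$ over $P$ correspond to double cosets $H\backslash G/Z$, which here are the cosets of $HZ$. So the number of such places is $[G:HZ]$. Combining this with $e=1$ and the fundamental equality $\sum e f = [M:F] = [G:H]$ gives
\[
f = \frac{[G:H]}{[G:HZ]} = [HZ:H] = [Z:Z\cap H].
\]
Thus $P$ splits completely in $M/F$ if and only if $f=1$, if and only if $Z \subseteq H$.

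\textbf{Step 2: the Frobenius generates $Z$.} Because $Q|P$ is unramified, $\varphi\colon Z \to \Gal(\kappa_Q/\kappa_P)$ is an isomorphism. That target group is cyclic, generated by the Frobenius $x \mapsto x^{|\kappa_P|}$, so $Z$ is cyclic and generated by $\left[\frac{E/F}{P}\right]$. Therefore $Z\subseteq H$ if and only if $\left[\frac{E/F}{P}\right]\in H=\Gal(E/M)$, which combined with Step 1 proves the lemma.

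The main obstacle is the bookkeeping in Step 1, namely correctly identifying the places of $M$ over $P$ with cosets and computing the residue degree. Everything else follows formally from the isomorphism $\varphi$.
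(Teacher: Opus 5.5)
Your proof is correct. The paper does not prove this lemma itself; it cites it from \cite[Proposition 1.4.12]{NXBook}. Your proposal therefore replaces a citation with a self-contained argument, and it follows the standard textbook route. First, $P$ splits completely in $M/F$ if and only if $G_Z(Q|P)\subseteq \Gal(E/M)$. Second, in the unramified case $G_Z(Q|P)\cong \Gal(\kappa_Q/\kappa_P)$ is cyclic and generated by the Artin symbol.

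One step is used without being stated. You deduce $f=[G:H]/[G:HZ]$ from $\sum ef=[M:F]$, which needs all places of $M$ over $P$ to have the same residue degree. This holds because $M/F$ is Galois, since $H$ is normal in the abelian group $G$; say so explicitly.

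You can also skip computing $f$ altogether. $P$ splits completely in $M/F$ exactly when it has $[M:F]=[G:H]$ places above it, because each $e_if_i\geq 1$ in the fundamental equality. Your double-coset count gives $[G:HZ]$ places, so the criterion $HZ=H$, i.e.\ $Z\subseteq H$, follows directly. This version needs neither the unramified hypothesis nor equal residue degrees in Step 1.

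The sentence describing $E^{Z}$ as the ``largest intermediate field in which $P$ has a rational-over-$P$ place below $Q$'' is loosely phrased. You can drop it, since the explicit count does the work.
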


\subsection{Multi-sequences and their linear complexity}
	Let $\F_q$ be the finite field of $q$ elements. A sequence $\bm s$ from $\F_q$ is simply $(s_0, s_1, s_2, \dots )$ with $s_j\in \F_q$ for any $j\in \mathbb{N}$.  If there exist a positive integer $m$ such that $s_{j+m} = s_j$ for all $j \geq 0$, then this sequence is called periodic. We identify the periodic sequence with a sequence $\bm s = (s_0, s_1, \dots, s_{m-1}) \in \F_q^m$ of finite length $m$. From now on, we focus on periodic sequences over $\F_q$. A finite subset $\bm S \subseteq \F_q^m \setminus \{\mbf 0\}$ is called a multi-sequence over $\F_q$ of length $m$ and dimension $\ell = |\bm S|$. Let $\bm S = \{\bm s_1, \bm s_2, \dots, \bm s_\ell \}$, and we denote the terms of the $k$-th sequence $\bm s_k$ by $s_{k, 0}, s_{k,1},\dots, s_{k, m-1}$. The  linear complexity $\LC(\bm S)$ of $\bm S$ is defined to be the least nonnegative integer $L$ for which there exists $\lambda_1, \lambda_2, \dots, \lambda_{L} \in \F_q$ such that
    \[
        s_{k, j} = \lambda_{L}s_{k, j-1} + \lambda_{L-1} s_{k, j-2}  +\dots + \lambda_1 s_{k, j - L}
    \]
    for all $1 \leq k \leq \ell$ and $0\leq j \leq m-1$, where the indexes involving $j$ are always taken to be the least nonnegative residue modulo $m$. In other words, $\LC(\bm S)$ is the least order of a linear recurrence over $\F_q$ that simultaneously generates all sequences $\bm s_k$ for $1\le k\le \ell$. The polynomial $T^L - \sum_{i = 1}^{L} \lambda_{i}T^{i-1} \in \F_q [T] $ is called the generating polynomial of the multi-sequence $\bm S$. It is evident from the definition that if $\bm S' \subseteq \bm S$ is a subset, then $\LC (\bm S') \leq \LC(\bm S)$.

    We identify the multi-sequence $\bm S$ as the $\ell \times m$ matrix $  [ \bm s_1 \trans, \bm s_2 \trans, \dots, \bm s_\ell \trans]\trans $, where $(-)\trans$ stands for the transpose of a vector or matrix. For an integer $0 \leq \epsilon \leq \ell m$, the  $\epsilon$-error linear complexity is defined by
    \[
        \LC _\epsilon(\bm S) = \min _{\bm T} \LC (\bm T),
    \]
    where $\bm T$ runs through all $\ell \times m$ matrix obtained by changing at most $\epsilon$ terms in $\bm S$. A quick observation: if $0 \leq \epsilon \leq \epsilon' \leq \ell m$, then $\LC_\epsilon (\bm S) \geq \LC_{\epsilon'} (\bm S) $.

\section{A general framework of multi-sequences via function fields}
	
	\label{sec:general_framework}
	In this section, we propose a general framework of explicit construction of multi-sequences with large joint linear complexity and large joint $\epsilon$-error linear complexity via automorphisms of function fields.
	
	Let $F$ be a global function field over finite field $\F_q$ with genus $g$. Let $\sigma $ be an $\F_q$-automorphism of $F$ of order $m$. Let $\ell$ be a positive integer. Assume that there exist $\ell + 1$ rational places of the fixed subfield $F^{\igen {\sigma}}$ which split completely in the extension $F  / F^{\igen{\sigma}}$. Pick one of such place from $F^{\igen{\sigma}}$, and label all rational places above it as $R, \sigma(R), \dots, \sigma^{m-1} (R)$. Label other mentioned $\ell m$ rational places of $F$ as $P_k, \sigma (P_k), \dots, \sigma ^{m-1} (P_k)$ for $k = 1,2, \dots, \ell$. Let $G$ be a divisor of $F$ fixed by $\sigma$, i.e., $\sigma(G)=G$, $\deg (G) \geq 2g  - 1$ and
	\[
		\supp (G) \cap \{ R, \sigma(R), \dots, \sigma ^{m-1}(R), P_1, \dots, \sigma ^{m-1}(P_1), \dots,P_\ell, \dots,  \sigma ^{m-1} (P_\ell)\} = \varnothing.
	\] By the Riemann--Roch Theorem \cite[Theorem 1.5.17]{GTM254}, we obtain that $\dim_{\F_q} \mcal L(G) = \deg (G) - g + 1$ and $\dim_{\F_q} \mcal L(G+R) = \deg (G) - g+2$. It follows that there exists a function $0 \neq f \in \mcal L(G+R) \setminus \mcal L(G)$. Hence, we have $(f)_\infty = R + G'$, where $G'$ is an effective divisor with $G \geq G'$.  For $k = 1,2,\dots, \ell$, consider the sequences
	\[
	\bm {s}_k (f) := (f (\sigma^{j - 1} (P_k))_{j = 1}^{m}.
	\]
	It is clear that $\bm S = \{\bm s _k \}_{k = 1}^\ell$ is a periodic multi-sequence of dimension $\ell$ and period $m$. Furthermore, we shall prove that the linear complexity of this multi-sequence $\bm S$ is $\operatorname{LC} (\bm S) = m$ under some additional assumptions.
	
	\begin{thm}
		Let $\bm S$ be the multi-sequence constructed above. If $(\ell -1)m \geq 2g$ and $2g - 1 \leq \deg (G) \leq (\ell - 1)m - 1 $, then the linear complexity of the periodic multi-sequence $\bm S$ is $m$.
		\label{thm:mult_seq_framework}
	\end{thm}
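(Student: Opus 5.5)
Since a linear recurrence of order $m$ (namely $s_{k,j}=s_{k,j-m}$) generates every $m$-periodic sequence, we always have $\LC(\bm S)\le m$. We will argue by contradiction: suppose $\LC(\bm S)=L<m$. Then there are $\lambda_1,\dots,\lambda_L\in\F_q$ with
\[
s_{k,j}=\lambda_L s_{k,j-1}+\dots+\lambda_1 s_{k,j-L}
\]
for all $k$ and all $j$, with indices taken modulo $m$. Put $\lambda_{L+1}=-1$ and $\lambda_i=0$ for $L+1<i\le m$. This gives a nonzero vector $(c_0,\dots,c_{m-1})\in\F_q^m$ such that $\sum_{i=0}^{m-1}c_i s_{k,j+i}=0$ for all $k,j$. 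Because $L<m$, at most $L+1\le m$ of the coefficients are nonzero, and the coefficient $-1$ sits at a shifted position.

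The central object is the function
\[
\widetilde f=\sum_{i=0}^{m-1}c_i\,\sigma^{-i}(f),
\]
where the sign convention is chosen so that $\widetilde f(\sigma^{j}(P_k))=\sum_i c_i f(\sigma^{j+i}(P_k))$. This works because $(\sigma^{-i}f)(Q)=f(\sigma^{i}Q)$, up to fixing the direction of the action consistently. We then check two properties.

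First, the poles of $\widetilde f$. Since $\sigma(G)=G$ and $\sigma^{-i}$ maps $\mcal L(G+R)$ onto $\mcal L(G+\sigma^{-i}R)$, we get $\widetilde f\in \mcal L\bigl(G+\sum_{i:c_i\ne0}\sigma^{-i}(R)\bigr)$. The places $\sigma^{-i}(R)$ are pairwise distinct, since $R$ splits completely and so has an orbit of length $m$. Also $v_{\sigma^{-i}R}(\sigma^{-i}f)=-1$, while every other summand is regular at $\sigma^{-i}R$, because $R$ and its conjugates lie outside $\supp(G)$. Hence $\widetilde f$ has a genuine simple pole at $\sigma^{-i}(R)$ for each $i$ with $c_i\neq0$. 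Since $c\neq0$, this shows $\widetilde f\neq 0$. This nonvanishing step is the one that really uses the choice $f\notin\mcal L(G)$, and it is the point I would check most carefully: I need $v_{\sigma^{-i}R}(\sigma^{-i}f)=v_R(f)=-1$ and the distinctness of the orbit.

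Second, the zeros of $\widetilde f$. By the recurrence, $\widetilde f$ vanishes at all $\ell m$ places $\sigma^{j}(P_k)$, and these lie outside $\supp(G)$ and away from the $R$-orbit. So $\widetilde f$ is a nonzero element of
\[
\mcal L\Bigl(G+\sum_{c_i\ne0}\sigma^{-i}R-\sum_{k,j}\sigma^j P_k\Bigr).
\]
This forces the divisor to have nonnegative degree:
\[
\deg G+(L+1)-\ell m\ge 0.
\]
Using $\deg G\le(\ell-1)m-1$ and $L+1\le m$, the left side is at most $(\ell-1)m-1+m-\ell m=-1<0$, which is a contradiction. Therefore $\LC(\bm S)=m$.

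It remains to check that the construction is well defined. The condition $(\ell-1)m\ge 2g$ only guarantees that the degree range $2g-1\le \deg G\le (\ell-1)m-1$ is nonempty. That range lets Riemann--Roch produce $f\in\mcal L(G+R)\setminus\mcal L(G)$, as already noted in the construction. I expect the main care to be in the bookkeeping of the automorphism's direction on places versus functions, so that the recurrence translates exactly into vanishing of $\widetilde f$ at the $P$-orbits.
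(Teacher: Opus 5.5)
Your proposal is correct and follows essentially the same argument as the paper. Like the paper, you form $\widetilde f=\sum_i c_i\sigma^{-i}(f)$, show it is nonzero through a simple pole on the orbit of $R$, observe that it vanishes at all $\ell m$ places $\sigma^j(P_k)$, and obtain the contradiction $\deg(G)+L+1-\ell m\ge 0$. The only difference is that the paper uses just the pole at $\sigma^{-\beta}(R)$, and your direction convention $(\sigma^{-i}f)(Q)=f(\sigma^i Q)$ is exactly the one the paper uses, so that bookkeeping is already settled.
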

	
	\begin{proof}
		Suppose that the linear complexity of the periodic multi-sequence $\bm S$ was $\mrm {LC}(\bm S) = \beta < m$. By definition, there exist $\lambda_0, \lambda_1, \dots, \lambda_{\beta -1}, \lambda_\beta = 1 \in \F_q$ such that
		\[
		\sum_{i = 0}^\beta \lambda_i f(\sigma^{i + s} (P_k)) = 0
		\]
		for all integers $s \geq 0$ and $k = 1,2,\dots, \ell$. By \cite[Lemma 3.5.2]{GTM254}, we obtain that $$ f(\sigma^{i + s} (P_k))=f(\sigma^i(\sigma^s(P_k))) = \Bigl(\sigma ^i(\sigma^{-i}(f)\Bigr) \Bigl(\sigma ^i (\sigma ^s (P_k)) \Bigr) = \Bigl(\sigma^{-i}(f) \Bigr) \Bigl(\sigma ^s(P_k) \Bigr)  $$
        for all integers $s \geq 0$ and $k = 1,2,\dots, \ell$. It follows that
		\begin{equation}
		\left(\sum_{i = 0}^\beta \lambda_i \sigma ^{-i }(f)  \right) (\sigma ^s (P_k)) =  0\label{eqn:tilde_f_zeros}
		\end{equation}
		holds for all integers $s \geq 0$ and $k = 1,2,\dots, \ell$. From the choice of $f \in \mcal L(G+R) \setminus \mcal L(G)$, we have $v_R(f) = -1$. Since $\sigma^{-i}(R) $ are pairwise distinct for $0\leq i\leq \beta$, the place $\sigma ^{-\beta} (R)$ is a pole of $\lambda_\beta \sigma ^{-\beta }(f)$ and is not a pole of $\sigma ^{-j }(f)$ for any $0\le j< \beta$ by \cite[Lemma 3.5.2]{GTM254}.
		Thus, the function $\tilde f := \sum _{i = 0}^\beta \lambda_i \sigma ^{-i}(f)$ is not zero, since $\tilde f$ must have a pole at $\sigma ^{-\beta} (R)$ by the Strict Triangle Inequality \cite[Lemma 1.1.11]{GTM254}. Therefore, $\tilde f$ is a nonzero element in the Riemann--Roch space $\mcal L(G + \sum_{i=0}^\beta \sigma^{-i}  (R))$. From Equation \eqref{eqn:tilde_f_zeros}, we see that $\sigma ^s (P_k)$ are zeros of $\tilde f$ for all $0\le s\le m-1$ and $1\le k\le \ell$. It follows that
		\[
		0 \neq \tilde f \in \mcal L\left(
		G+ \sum_{i=0}^\beta \sigma^{-i }(R) - \sum_{k = 1}^\ell \sum_{s = 0}^{m-1} \sigma ^{s}(P_k)
		\right).
		\]
		Therefore, the divisor $	G+ \sum_{i=0}^\beta \sigma^{-i }(R) - \sum_{k = 1}^\ell \sum_{s = 0}^{m-1} \sigma ^{s}(P_k)$ has degree at least $0$, i.e.,
		\[
		\deg (G)  - \ell m  + \beta+1 \geq 0.
		\]
		From the assumption on $\deg (G)$, it is clear that $\beta \geq \ell m  -\deg (G)  -1 \geq m$, which is a contradiction. Thus, we obtain that $\LC(\bm S) = m$.
	\end{proof}
	
	\begin{cor}\label{cor:sub_multi_seq_same_LC}
		Let $\delta$ be a positive integer satisfying $\delta \geq 1+(\deg(G)+1)/m$.
		Any $\delta$ sequences in $\bm S$ form a multi-sequence of linear complexity $m$ as well.
	\end{cor}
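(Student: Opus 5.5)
We rerun the argument of \thmref{thm:mult_seq_framework}, this time using only $\delta$ of the sequences. Fix indices $1 \le k_1 < \dots < k_\delta \le \ell$ and let $\bm S' = \{\bm s_{k_1}, \dots, \bm s_{k_\delta}\}$. Since $\bm S' \subseteq \bm S$, we have $\LC(\bm S') \le \LC(\bm S) = m$. (Even without the theorem, $\LC(\bm S')\le m$ holds trivially, because of the recurrence $s_{k,j} = s_{k,j-m}$.) So it suffices to rule out $\LC(\bm S') = \beta < m$.

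Assume $\beta < m$. Take the monic recurrence $\lambda_0, \dots, \lambda_\beta = 1$ that generates every $\bm s_{k_t}$. Using \cite[Lemma 3.5.2]{GTM254} exactly as in the proof of \thmref{thm:mult_seq_framework}, the function
\[
\tilde f = \sum_{i=0}^{\beta} \lambda_i \sigma^{-i}(f)
\]
vanishes at every place $\sigma^s(P_{k_t})$ with $0 \le s \le m-1$ and $1 \le t \le \delta$. This function is nonzero. The reason is that $v_R(f) = -1$ and the places $\sigma^{-i}(R)$, $0 \le i \le \beta \le m-1$, are pairwise distinct because $R$ lies in an orbit of length $m$. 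Hence $\sigma^{-\beta}(R)$ is a pole of $\sigma^{-\beta}(f)$ and of no other summand, and the strict triangle inequality forces $\tilde f \neq 0$. Since $\sigma(G) = G$, each $\sigma^{-i}(f)$ lies in $\mcal L(G + \sigma^{-i}(R))$. It follows that
\[
0 \neq \tilde f \in \mcal L\Bigl(G + \sum_{i=0}^{\beta}\sigma^{-i}(R) - \sum_{t=1}^{\delta}\sum_{s=0}^{m-1}\sigma^s(P_{k_t})\Bigr).
\]
Here the zeros do not cancel against $G$ or the $R$-orbit, because the supports are disjoint by construction.

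Taking degrees gives $\deg(G) + \beta + 1 - \delta m \ge 0$, so $\beta \ge \delta m - \deg(G) - 1$. The hypothesis $\delta \ge 1 + (\deg(G)+1)/m$ says exactly that $\delta m \ge m + \deg(G) + 1$. Therefore $\beta \ge m$, which contradicts $\beta < m$, and so $\LC(\bm S') = m$.

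No step is a real obstacle; the proof is a direct reprise of the theorem's argument. The only points that need care are these: the distinctness of the $\sigma^{-i}(R)$ needs $\beta \le m-1$; and the counting now uses only $\delta m$ zeros in place of $\ell m$. It is the latter change that turns the condition on $\deg(G)$ into the stated lower bound on $\delta$.
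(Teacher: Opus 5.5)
Your proposal is correct and is essentially the paper's proof. You rerun the argument of \thmref{thm:mult_seq_framework} with only the $\delta m$ zeros $\sigma^s(P_{k_t})$, get $\deg(G)+\beta+1-\delta m\ge 0$, and the hypothesis on $\delta$ then forces $\beta\ge m$; you also usefully make explicit that $\LC(\bm S')\le m$ is trivial and that distinctness of the $\sigma^{-i}(R)$ needs $\beta\le m-1$.
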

	\begin{proof}
		Let $\bm S_0\subseteq \bm S$ be a subset of $\bm S$ with $\delta$ sequences. Without loss of generality, we can assume that $\bm S_0 = \{\bm s_i(f)\}_{i = 1}^\delta$.
		By a similar argument given in  \thmref{thm:mult_seq_framework}, we can prove this corollary. If the linear complexity of $\bm S_0$ is $\beta = \operatorname{LC}(\bm S_0) < m$, then there exists a nonzero function $\tilde f = \sum _{i = 0}^\beta \lambda_i \sigma ^{-i}(f)$ in the Riemann--Roch space
		\[\mcal L\left(
		G+ \sum_{i=0}^\beta \sigma^{-i }(R) - \sum_{k = 1}^\delta \sum_{s = 0}^{m-1} \sigma ^{s}(P_k)
		\right).\]
		Therefore, we can obtain an inequality
		\[
		\deg(G) + \beta + 1 -\delta m \geq 0,
		\]
		which implies $\beta \geq \delta m - \deg(G) - 1 \geq m$. Thus, we derive a contradiction.
	\end{proof}

	If there are not many errors occurred in the multi-sequences, then we can obtain the following results on the $\epsilon$-error linear complexity of the multi-sequence $\bm S$.
	\begin{thm}
		Let $\bm S$ be the multi-sequence constructed at the beginning of this section. Let $\delta _ 0 = 1  + \ceiling{(\deg (G) + 1) / m}$, where $\ceiling {x}$ is the least integer that is not less than $x \in \R$. When $2 g - 1 \leq \deg (G) \leq (\ell - 2) m - \ell + \delta_0 - 3$, we have the following conclusions.
		\begin{enumerate}
			\item The $\epsilon$-error linear complexity of $\bm S$ is $m$ if $\epsilon \leq \ell - \delta _0$.
			\item The $\epsilon$-error linear complexity of $\bm S$ is at least $\chi_1:= -1 + (\ell m - \deg (G)) / (\ell - \delta _ 0 + 2)$ if $\epsilon \leq \ell - \delta _0 + 1$.
			\item If $\epsilon \leq  \ell - \delta _ 0 + 2$, then  the $\epsilon$-error linear complexity of $\bm S$ satisfies
			\[
				\LC_{\epsilon}(\bm S) \geq -1+ \min \left\{ \chi_2:=
				 \frac {(\ell - 1)m - \deg (G)} {\ell - \delta_0 + 1}, \chi_3:= \frac {\ell m -\deg (G)} {\ell - \delta _0 + 3}
				\right\}.
			\]
		\end{enumerate} \label{thm:general_eps_err}
	\end{thm}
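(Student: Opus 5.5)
The plan is to reduce every part to the counting argument in the proof of \thmref{thm:mult_seq_framework}, combined with a pigeonhole count of how many of the $\ell$ sequences the $\epsilon$ errors can touch. Let $\bm T$ be obtained from $\bm S$ by changing at most $\epsilon$ entries, and call a row $\bm s_k$ \emph{clean} if none of its entries was changed. Since $\LC(\bm T)\le m$ always, it suffices to bound $\beta:=\LC(\bm T)$ from below, and we may assume $\beta<m$.

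\emph{Part (1).} If $\epsilon\le \ell-\delta_0$, at least $\delta_0$ rows of $\bm T$ are clean. Since $\delta_0\ge 1+(\deg(G)+1)/m$, \coref{cor:sub_multi_seq_same_LC} says these rows alone form a multi-sequence of linear complexity $m$. By monotonicity of $\LC$ under taking subsets, $\LC(\bm T)=m$.

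\emph{General device.} If fewer than $\delta_0$ rows are clean, take a recurrence $\lambda_0,\dots,\lambda_\beta=1$ generating $\bm T$ and form $\tilde f=\sum_{i=0}^\beta\lambda_i\sigma^{-i}(f)$ as before.
\begin{itemize}
\item[] Nonvanishing: $\tilde f\neq0$ because of its pole at $\sigma^{-\beta}(R)$, and $\tilde f\in\mcal L(G+\sum_{i=0}^{\beta}\sigma^{-i}(R))$.
\item[] Zeros: for a row $k$, the relation indexed by a shift $s$ involves the positions $s,\dots,s+\beta$ modulo $m$. If none of these positions is corrupted in row $k$, the relation holds for the true values $f(\sigma^{\cdot}(P_k))$, so $\tilde f(\sigma^s(P_k))=0$. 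Since $\beta<m$, one corrupted position lies in at most $\beta+1$ windows. Hence a row with $e$ errors loses at most $e(\beta+1)$ of its $m$ zeros.
\end{itemize}
For a chosen set $K$ of rows carrying $E$ errors in total, the degree argument then gives
\[
\deg(G)+\beta+1-|K|m+E(\beta+1)\ge 0 .
\]

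\emph{Part (2).} Now $\epsilon\le\ell-\delta_0+1$, so either at least $\delta_0$ rows are clean, and part (1) gives $m$, or exactly $\delta_0-1$ rows are clean and each of the other $\ell-\delta_0+1$ rows has exactly one error. In the latter case take $K$ to be all rows, with $E=\ell-\delta_0+1$. This gives $(\beta+1)(\ell-\delta_0+2)\ge \ell m-\deg(G)$, i.e. $\beta\ge\chi_1$.

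\emph{Part (3).} Now $\epsilon\le\ell-\delta_0+2$, and besides the case covered by part (1) there are two configurations.
\begin{itemize}
\item[] (a) At most $\delta_0-2$ rows are clean. Then every row that is not clean has exactly one error, so $E\le\ell-\delta_0+2$. Taking all rows gives $(\beta+1)(\ell-\delta_0+3)\ge\ell m-\deg(G)$, i.e. $\beta\ge\chi_3$.
\item[] (b) Exactly $\delta_0-1$ rows are clean. Then either all errors are single, which is covered by part (2) since $\chi_1\ge\chi_3$, or one row carries two errors. In that last case we discard the doubly corrupted row. Taking $K$ to be the remaining $\ell-1$ rows with $E=\ell-\delta_0$ gives $(\beta+1)(\ell-\delta_0+1)\ge(\ell-1)m-\deg(G)$, i.e. $\beta\ge\chi_2$.
\end{itemize}
In every case $\beta\ge-1+\min\{\chi_2,\chi_3\}$.

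\emph{Expected obstacle.} The delicate step is the bookkeeping of lost zeros: the cyclic wrap-around of windows when positions are read modulo $m$, and the justification that we may assume $\beta<m$. Here the upper bound $\deg(G)\le(\ell-2)m-\ell+\delta_0-3$ should be what guarantees that each $\chi_i$ does not exceed $m$ in the relevant range, so that the reduction to $\beta<m$ loses nothing. I would verify this inequality first.
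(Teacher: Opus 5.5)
Your proposal is correct and follows the paper's proof essentially step for step: Corollary~\ref{cor:sub_multi_seq_same_LC} handles part (1), and parts (2) and (3) use the nonzero function $\tilde f$ with $m-\beta-1$ surviving zeros in each singly corrupted row, discarding the doubly corrupted row to get $\chi_2$ and keeping all rows to get $\chi_3$. Two small repairs: since $\chi_2,\chi_3$ are defined without the $-1$, your intermediate conclusions should read $\beta\ge -1+\chi_2$, $\beta\ge -1+\chi_3$ and $\chi_1\ge -1+\chi_3$; and the reduction to $\beta<m$ (which the paper uses tacitly) comes not from the upper bound on $\deg(G)$ but from $m(\delta_0-2)\le \deg(G)$, which gives $\chi_1\le m-1$ and $-1+\chi_3<m-1$.
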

	
	\begin{proof}
		\begin{enumerate}
			\item Let $\bm S_1$ be any multi-sequence obtained by altering at most $\epsilon \leq \ell - \delta_0$ entries in $\bm S$. Then $\abs  { \bm S \cap \bm S_1} \geq \ell - \epsilon \geq \delta_0$. By \coref{cor:sub_multi_seq_same_LC}, the linear complexity of $\bm S\cap \bm S_1$ is $\operatorname{LC}(\bm S \cap \bm S_1) = m$. It is easy to see that $m \geq \LC (\bm S_1) \geq \LC(\bm S \cap \bm S_1) = m$. Thus, $\LC(\bm S_1) = m$. When $\bm S_1$ runs through all such multi-sequences, it is clear that $\LC _{\epsilon} (\bm S) = m$.

			\item Let $\bm S_2$ be any multi-sequence obtained by altering at most $\epsilon\leq \ell - \delta_0 + 1$ terms in $\bm S$. Then we have $ \abs {\bm S \cap \bm S_2}\geq \ell - \epsilon \geq \delta_0 - 1$. If $\abs {\bm S \cap \bm S_2} \geq \delta_0$, then $\LC (\bm S_2) = m$ from item (1). Now assume $\abs {\bm S \cap \bm S_2} = \delta_0 - 1$, i.e.\ $\ell - \delta_0 + 1$ sequences from $\bm S$ have changed. It is clear that each changed sequence has exactly one altered entry. We may assume that $\bm S \cap \bm S_2 = \{ \bm s_j (f) \}_{j = 1}^{\delta_0 - 1}$, and the unique entry $f(\sigma ^{w_k }(P_k))$ with $0 \leq w_k \leq m-1$ is altered for each $\delta_0 \leq  k \leq \ell$. Assume that $\LC (\bm S_2) = \beta$ and the polynomial $\sum_{j = 0}^\beta \lambda_j T^j$ generates $\bm S_2$ with $\lambda _\beta = 1$. Then we have
			\[
				\sum_{j = 0}^\beta \lambda_j f (\sigma ^{j + s } (P_k)) = 0
			\]
			for all $k = 1,2, \dots, \delta_0 - 1$ and $s \geq 0$; moreover,
			\[
				\sum_{j = 0}^{\beta} \lambda _j f(\sigma ^{j + s} (P_k)) = 0
			\]
			for all $k = \delta_0, \delta_0+1, \dots, \ell$ and  $w_k + 1 \leq s \leq m + w_k -1 - \beta$. Thus, the function $\sum _{j = 0}^\beta \lambda _j \sigma ^{-j} (f)$ belongs to the nonzero Riemann--Roch space
			\[
				\mcal L \left(
				G + \sum _{j = 0}^\beta \sigma ^{-j }(R) - \sum _{k = 1}^{\delta_0 - 1} \sum _{s = 0}^{m-1} \sigma ^{s} (P_k) - \sum_{k = \delta_0}^{\ell } \sum _{s = w_k + 1} ^{m + w_k - \beta - 1} \sigma ^{s}(P_k)
				\right).
			\]
			Therefore, the associated divisor of this space has a nonnegative degree, which yields
			\[
				\deg (G) + \beta + 1 \geq (\delta_0 - 1) m + (\ell - \delta _ 0 + 1) (m - \beta - 1).
			\]
			Similarly, this item follows from
			\[
				\LC (\bm S_2) = \beta \geq - 1 + \frac {\ell m -\deg (G)} {\ell - \delta _0+ 2}.
			\]

			\item Let $\bm S_3$ be any multi-sequence obtained by altering exactly $\epsilon =  \ell - \delta_0 + 2$ entries in $\bm S$. Then we have $\abs { \bm S \cap \bm S_3} \geq \delta _ 0 - 2$. We shall discuss this item by cases.
			\begin{enumerate}
				\item If $\abs {\bm S \cap \bm S_3} \geq \delta_0$, then $\LC (\bm S_3) = m$ from item (1).
				\item If $\abs {\bm S \cap \bm S_3} = \delta_0 - 1$, i.e., $\ell -\delta_0 + 1$ sequences in $\bm S$ get changed, then there exist $\ell - \delta_0 $ sequences with exactly one altered entry and  one sequence with two altered entries by the pigeon-hole principle. Assume that $\bm S \cap \bm S_3 = \{  \bm s_k (f)\}_{k = 1}^{\delta_0 - 1}$ and   the only altered entry of $\bm s_k (f)$ is $f (\sigma ^{w_k} (P_k))$ with some $0 \leq w_k \leq m-1 $ for each $k  = \delta_0, \dots, \ell - 1$. Let $\LC (\bm S_3) = \beta$ and the polynomial $\sum_{j = 0}^\beta \lambda _j T^j$ generate $\bm S_3$ with $\lambda_\beta = 1$. By a similar argument, there exists a nonzero element in the Riemann-Roch space
				\[
					\mcal L \left(
					G + \sum _{j = 0}^\beta \sigma ^{-j }(R) - \sum _{k = 1}^{\delta_0 - 1} \sum _{s = 0}^{m-1} \sigma ^{s} (P_k) - \sum_{k = \delta_0}^{\ell -1} \sum _{s = w_k + 1} ^{m + w_k - \beta - 1} \sigma ^{s}(P_k)
					\right).
				\]
				Hence, such a divisor associated to the above space has a nonnegative degree, i.e.,
				\[
					\deg (G) + \beta + 1 \geq (\delta_0 - 1)m + (\ell - \delta_0) (m - \beta - 1).
				\]
				This derives
				\[
					\LC (\bm S_3)=\beta \geq -1+ \frac {(\ell - 1) m -\deg (G)} {\ell - \delta_0 + 1}.
				\]
			
				\item If $\abs {\bm S \cap \bm S_3} = \delta _0 - 2$, i.e., $\ell - \delta_0 + 2$ sequences from $\bm S$ get changed, then each of $\ell - \delta_0 + 2$ sequences has exactly one altered entry. Let $\beta = \LC (\bm S_3)$. By an analogous argument as in item (2), we obtain that
                \[
                    \deg (G) + \beta + 1 \geq (\delta_0 - 2) m + (\ell - \delta_0 + 2) (m - \beta - 1).
                \]
               It follows that
				\[
					  \LC (\bm S_3)=\beta \geq -1 + \frac {\ell m - \deg (G)} {\ell - \delta _0 +3}.
				\]
			\end{enumerate}
			Moreover, if $\epsilon \leq \ell - \delta_0 + 1$ is allowed, then we can obtain
			\[
				\LC _\epsilon (\bm S) \geq -1 +  \min \left\{
				\frac {(\ell - 1)m - \deg (G)} {\ell - \delta_0 + 1}, \frac {\ell m -\deg (G)} {\ell - \delta _0 + 3}
				\right\}
			\]
            after comparing with the bound in item (2).
		\end{enumerate}
	\end{proof}

	\section{Multi-sequences from various function fields}
	
	\label{sec:instances}
	
	In this section, we explicitly construct many multi-sequences with large  linear complexity and large  error linear complexity from various function fields, such as subfields of cyclotomic function fields, the maximal function field with the second largest genus, the Suzuki function field, and the \GK function field.

	\subsection{Cyclotomic function fields}
	
		In this subsection, we apply the framework of constructing multi-sequences in \secref{sec:general_framework} to subfields of cyclotomic function fields.
	
	We first recall the definition of cyclotomic function fields. Let $\F_{q^2}$ be the finite field of $q^2$ elements and let $K = \F_{q^2} (x)$ be the rational function field over $\F_{q^2}$ of one variable. Fix an algebraic closure $\overline K$ of $K$. Consider an endomorphism on $\overline K$ given by
	\[
	     \phi (z)= z^x:= z^{q^2} + xz, \quad \text{for all }z \in \overline {K}.
	\]
	Then $\overline K$ can be viewed as an $\F_{q^2}[x]$-module with the action $ z^{f} := (f(\phi))(z)$ for all $f \in \F_{q^2} [x] $. For any $M \in\F_{q^2}[x]$, let $\Lambda_M$ be the $M$-torsion submodule of $\overline K$, that is
	\[
	\Lambda _M := \{  z \in \overline K\colon z^M = 0\}.
	\]
	The subfield of $\overline K$ that is generated by $K$ and $\Lambda_M$ is called the cyclotomic function field of modulus $M$ over $\F_{q^2}(x)$, and it is denoted by $K(\Lambda _M)$.
	In this paper, we focus on the case when the modulus is irreducible over $\F_{q^2}$. The following results can be summarized from \cite{Hay1974} and \cite[Theorem 3.2.6-3.2.9]{NXBook}.
	
	\begin{lem} \label{prop:Cyclt_Fun_Fld_Irr_Mod}
		Let $p(x)$ be an irreducible polynomial over $\F_{q^2}$ of degree $d$. The following results hold true for the cyclotomic function field $F = K (\Lambda _{p(x)} )$.
		\begin{enumerate}
			\item $\Lambda_{p(x)}$ is a cyclic $\F_{q^2}[x]$-module with a generator $\lambda$ and $F  =K(\lambda)$.
			\item $F$ is a finite abelian extension of $K$ with Galois group $\Gal(F / K) \cong (\F_{q^2} [x] /(p(x)))^\ast$, and the automorphism  $\sigma_f$ of $F$ corresponding to $\overline f \in \F_{q^2}[x]/(p(x))$ is determined by $\sigma_f(\lambda) = \lambda^f$.
			\item There exists only one place $P = (p(x))_0$ of $K$ that is totally ramified in $F/ K$.
			\item The infinite place $\infty $ of $K$ splits into $(q^{2d} - 1)/(q^2-1)$ places of $F$, and for each place $P_\infty $ of $F$ that lies above $\infty$, the ramification index of $P_\infty | \infty$ is $ q^2 - 1$.
			\item Other places of $K$ are unramified in $F/K$.
			\item For $r(x) \in \F_{q^2}[x]$ that is irreducible and not divisible by $p(x)$, the Artin symbol associated with the place $R = (r(x))_0^K$ is given by
			\[
				\left[\frac {F/K} {R}\right] (\lambda) = \lambda^{r(x)}.
			\]
		\end{enumerate}
	\end{lem}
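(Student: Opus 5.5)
This lemma collects standard facts about the Carlitz cyclotomic function field $K(\Lambda_{p(x)})$ over $K=\F_{q^2}(x)$. The plan is to check each item against the Carlitz-module theory in \cite{Hay1974} and \cite[Theorems 3.2.6--3.2.9]{NXBook}, specialized to an irreducible modulus $M=p(x)$ of degree $d$ over the constant field $\F_{q^2}$.

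For item (1), note that $z\mapsto z^{p(x)}$ is an additive separable polynomial of degree $q^{2d}$, so $\Lambda_{p(x)}$ has exactly $q^{2d}$ elements. It is a module over $\F_{q^2}[x]/(p(x))$, which is a field, and hence a vector space over that field. Comparing cardinalities shows it has dimension one, so it is cyclic. Any nonzero $\lambda$ generates it, and every torsion element is $\lambda^f=f(\phi)(\lambda)\in K(\lambda)$; therefore $F=K(\lambda)$.

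For item (2):
\begin{itemize}
\item Any $\tau\in\Gal(\overline K/K)$ commutes with $\phi$, because $\phi$ has coefficients in $K$. So $\tau$ is a module automorphism of $\Lambda_{p(x)}$, and $\tau(\lambda)=\lambda^{f}$ for a unique unit $\overline f$.
\item This gives an injective homomorphism $\Gal(F/K)\to(\F_{q^2}[x]/(p(x)))^\ast$.
\item Surjectivity needs $[F:K]=q^{2d}-1$. This follows from the irreducibility of $\Psi(z)=z^{p(x)}/z$, which is Eisenstein at $(p(x))_0$: its constant term equals $p(x)$ and the other coefficients are divisible by $p(x)$.
\end{itemize}

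Eisenstein at $(p(x))_0$ also yields item (3): that place is totally ramified with ramification index $q^{2d}-1$.

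Items (4) and (5) are the main obstacle.

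For (5), take a finite place $(r(x))_0$ with $r\ne p$. The discriminant of $\Psi$ divides a power of $p(x)$, so such places are unramified.

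For (4), I would analyse the infinite place via the decomposition into the constant subextension. The subgroup $\F_{q^2}^\ast$ of $(\F_{q^2}[x]/(p))^\ast$ acts by $\lambda\mapsto c\lambda$. Its fixed field is $K(\lambda^{q^2-1})$, in which $\infty$ splits completely into $(q^{2d}-1)/(q^2-1)$ places. Above that field, $\infty$ is totally and tamely ramified with index $q^2-1$.

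To justify this, I would compute a Newton polygon at $\infty$ for $\Psi$. Since $\Psi(z)=\prod_{0\neq\mu\in\Lambda}(z-\mu)$, all roots have the same valuation at $\infty$, namely $-d/(q^2-1)$ times $v_\infty(x)$ scaled appropriately. This forces $e\ge q^2-1$. On the other hand, the ratios $\mu/\lambda$ for $\mu=c\lambda$ are constants, which bounds $e$ by $q^2-1$. The count of places above $\infty$ then follows from $efg=[F:K]$ with $f=1$. The tame ramification is consistent with the Hurwitz formula, which serves as a check.

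For item (6), $R=(r(x))_0$ is unramified, so its Frobenius is characterized by $\tau(z)\equiv z^{q^{2\deg r}} \pmod Q$ for $Q\mid R$. I would show the congruence $\lambda^{r(x)}\equiv\lambda^{q^{2\deg r}}\pmod{r(x)}$. This holds because the Carlitz polynomial $z^{r}$ has leading term $z^{q^{2\deg r}}$ and its other coefficients are divisible by $r(x)$ when $r$ is irreducible, which is the analogue of Fermat's little theorem. The Artin symbol is uniquely determined by this congruence, so it is $\sigma_r$.
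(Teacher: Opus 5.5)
The paper gives no proof of this lemma; it quotes Hayes and Niederreiter--Xing. Your self-contained derivation is therefore a different route. Items (1), (2), (5) and the total ramification of $(p(x))_0$ in (3) are fine, but item (4) has a genuine gap.

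Write $Q=q^2$, normalize $v_\infty(x)=-1$, and put $v=v_{P_\infty}/e$. Your claim that all roots of $\Psi$ have the same valuation at $\infty$ is false for $d\ge 2$. With $\Psi(z)=\sum_{i=0}^{d}[p,i]z^{Q^i-1}$ one has $v_\infty([p,i])=-Q^i(d-i)$. The slope between the consecutive points $(Q^i-1,-Q^i(d-i))$ and $(Q^{i+1}-1,-Q^{i+1}(d-i-1))$ is $\frac{Q}{Q-1}-(d-i)$, which is strictly increasing in $i$. So the Newton polygon has $d$ segments, and $\Lambda_{p(x)}$ has exactly $(Q-1)Q^i$ elements of valuation $d-1-i-\frac{1}{Q-1}$ for $0\le i\le d-1$. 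For example, with $d=2$ and monic $p=x^2+ax+b$, $\Psi=z^{Q^2-1}+(x^Q+x+a)z^{Q-1}+p$ has vertices $(0,-2)$, $(Q-1,-Q)$, $(Q^2-1,0)$. The lower bound still holds, because each of these valuations has exact denominator $Q-1$, which forces $(Q-1)\mid e$. But it does not come from a common valuation, and a common value $\pm d/(Q-1)$ would not even force $e\ge Q-1$ when $\gcd(d,Q-1)>1$.

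More seriously, nothing in your sketch bounds $e$ from above or proves $f=1$. The fact that $\mu/\lambda\in\F_Q^\ast$ for $\mu=c\lambda$ only says that $\F_Q^\ast$ acts by scalars. It does not show that the decomposition group $D$ of $P_\infty\mid\infty$ lies in $\F_Q^\ast$, and that containment is exactly the assertion that $\infty$ splits completely in $K(\lambda^{Q-1})$. You state this splitting and then use $f=1$ in $efg=[F:K]$, so the count $(q^{2d}-1)/(q^2-1)$ is not established. Hensel applied to $H(w)$, where $\Psi(z)=H(z^{Q-1})$, does not settle it directly, since the residual polynomials on the segments of length $Q^i$ with $i\ge 1$ are $Q^i$-th powers.

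One way to close the gap with the correct polygon is as follows. The set $V_0=\{\mu\in\Lambda_{p(x)}: v(\mu)\ge d-1-\frac{1}{Q-1}\}$ is an $\F_Q$-subspace by the ultrametric inequality and has exactly $Q$ elements, so $V_0=\F_Q\mu_0$. The group $D$ preserves $v_{P_\infty}$ and hence $V_0$. So for $\sigma_f\in D$ we get $\mu_0^{f}=\sigma_f(\mu_0)=c\mu_0=\mu_0^{c}$ with $c\in\F_Q^\ast$. Since $\mu_0\neq 0$ has annihilator $(p(x))$, this gives $f\equiv c$. Thus $ef=|D|\le Q-1$, and together with $(Q-1)\mid e$ this yields $e=Q-1$, $f=1$, and $(Q^d-1)/(Q-1)$ places above $\infty$.

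Three smaller points:
\begin{enumerate}
\item In (6), the leading coefficient of $z^{r}$ is the leading coefficient of $r$, so the formula and your congruence require $r$ monic. This is harmless for $r=x-\alpha$.
\item Also in (6), to pass from the single congruence at $\lambda$ to equality with the Frobenius, note that distinct torsion points stay distinct modulo $\mathfrak Q$. Their differences are nonzero torsion points, whose product is $p(x)$ up to a constant, hence a unit at $\mathfrak Q$.
\item The word ``only'' in (3) needs (4) and (5) together with $d\ge 2$, since for $d=1$ the place $\infty$ is totally ramified as well.
\end{enumerate}
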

	
		\begin{prop} \label{prop:Spl_Cycl_Fun_Fld}
		Let $F$ be the cyclotomic function field of modulus $p(x)$ over $K = \F_{q^2}(x)$. Let $A$ be the subgroup of $\Gal (F / K )\cong (\F_{q^2}[x] /(p(x)))^\ast $ generated by subgroups $(\F_q[x] / (p(x)) )^\ast $ and $\F_{q^2}^\ast$. Let $M = F^A$ be its fixed subfield. There are at least $q+1$ rational places of $K$ that split completely in $M/K$, and the genus of $M$ is
		\[
		g(M) = \frac {d-2}2 \left(\frac {q^d + 1} {q+1} - 1\right).
		\]
	\end{prop}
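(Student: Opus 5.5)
The plan has two independent parts: count the places that split completely using Lemma~\ref{lem:crit_tot_split} together with the Artin symbols of Lemma~\ref{prop:Cyclt_Fun_Fld_Irr_Mod}, and compute $g(M)$ with the Hurwitz genus formula for the tame extension $M/K$. Write $G=\Gal(F/K)\cong(\F_{q^2}[x]/(p(x)))^\ast\cong\F_{q^{2d}}^\ast$, so $|G|=q^{2d}-1$.

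\emph{Assumption needed.} The stated genus forces $(q^d+1)/(q+1)$ to be an integer. I therefore read the statement in the setting the formula requires: $p(x)\in\F_q[x]$ is irreducible of odd degree $d\ge 3$, hence still irreducible over $\F_{q^2}$. I read $(\F_q[x]/(p(x)))^\ast$ as the image of the polynomials in $\F_q[x]$ coprime to $p$.

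\emph{Order of $A$.} Under this assumption the image of $(\F_q[x]/(p(x)))^\ast$ in $G$ is the subgroup $\F_{q^d}^\ast$ of $\F_{q^{2d}}^\ast$. Since $d$ is odd, $\F_{q^d}\cap\F_{q^2}=\F_q$. Hence
\[
|A|=\frac{(q^d-1)(q^2-1)}{q-1}=(q^d-1)(q+1),
\qquad
n:=[M:K]=\frac{q^{2d}-1}{|A|}=\frac{q^d+1}{q+1}.
\]
This identification of the image, and the check that the intersection with $\F_{q^2}^\ast$ is exactly $\F_q^\ast$, is the step I expect to need the most care. It is also where the implicit hypotheses on $p$ enter.

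\emph{The $q$ finite places.} For $a\in\F_q$, the place $(x-a)_0$ is unramified in $F/K$ by part (5) of Lemma~\ref{prop:Cyclt_Fun_Fld_Irr_Mod}, because $p(x)\nmid x-a$ when $d\ge 2$. By part (6) its Artin symbol is $\lambda\mapsto\lambda^{x-a}$, which corresponds to the class of $x-a$ in $(\F_q[x]/(p(x)))^\ast\subseteq A=\Gal(F/M)$. Lemma~\ref{lem:crit_tot_split} then shows that $(x-a)_0$ splits completely in $M/K$. This gives $q$ rational places.

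\emph{The infinite place.} Lemma~\ref{lem:crit_tot_split} does not apply here, since $\infty$ ramifies in $F/K$; I will argue with the decomposition group instead. By part (4), $\infty$ splits into $(q^{2d}-1)/(q^2-1)$ places of $F$, each with ramification index $q^2-1$. So $e\cdot r=|G|$ forces relative degree $1$, and the decomposition group equals the inertia group, of order $q^2-1$. Since $G$ is abelian this group does not depend on the place above $\infty$. I will identify it with the constants $\F_{q^2}^\ast$, using the standard description of the infinite place in Carlitz cyclotomic fields from \cite{Hay1974}. As $\F_{q^2}^\ast\subseteq A$, the place $\infty$ is unramified in $M/K$ with trivial decomposition group there. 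Hence it splits completely in $M/K$, and we obtain at least $q+1$ rational places.

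\emph{Genus.} All places of $K$ other than $(p(x))_0$ are unramified in $M/K$: the finite ones by part (5), and $\infty$ by the previous paragraph. The place $(p(x))_0$ is totally ramified in $F/K$, hence in $M/K$, with $e=n$. This is tame because $n$ divides $q^{2d}-1$, which is prime to $p$. By Dedekind's different theorem the different exponent is $n-1$, and the place has degree $d$. The Hurwitz formula then gives
\[
2g(M)-2=n(-2)+(n-1)d,
\qquad\text{so}\qquad
g(M)=\frac{(n-1)(d-2)}{2}=\frac{d-2}{2}\left(\frac{q^d+1}{q+1}-1\right).
\]
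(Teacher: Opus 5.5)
Your proof is correct and follows essentially the same route as the paper. It uses the Artin symbols $\lambda\mapsto\lambda^{x-a}$, which lie in $(\F_q[x]/(p(x)))^\ast\subseteq A$, together with Lemma~\ref{lem:crit_tot_split} for the $q$ finite places; for $\infty$ it uses the decomposition group of order $q^2-1$ (with $f=1$) being $\F_{q^2}^\ast\subseteq A$; and it finishes with the Hurwitz formula for the single totally and tamely ramified place $(p(x))_0$ of degree $d$.

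The differences are minor. The paper identifies that decomposition group without citing Hayes, as the unique subgroup of order $q^2-1$ in the cyclic group $\Gal(F/K)\cong\F_{q^{2d}}^\ast$, a shortcut you could also use. Your count $|A|=(q^d-1)(q+1)$ via $\F_{q^d}\cap\F_{q^2}=\F_q$ is cleaner than the paper's bijection argument, and it makes the paper's implicit hypotheses $p(x)\in\F_q[x]$ and $d$ odd explicit.
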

	
	\begin{proof}
		Let $Z$ be the decompostion field of the infinity place $\infty$. By item (4) of \lemmaref{prop:Cyclt_Fun_Fld_Irr_Mod}, the place $\infty$ splits completely in $Z / K$, and $\Gal (F / Z) \cong \F_{q^2}^\ast$ is the unique subgroup of $\Gal (F/ K )$ of order $q^2 - 1$. Since $\Gal (F/ Z) \cong \F_{q^2}^\ast \leq A$, the fixed subfield $M= F^A$ is an intermediate field of $Z /K$. Therefore, $\infty$ splits completely in $M /K$.

        For each $\alpha \in \F_q$, the place $P_\alpha = (x - \alpha)_0^K$ has Artin symbol given by $\lambda \mapsto \lambda ^{x - \alpha}$. It is easy to confirm that $x - \alpha \mod (p(x))$ lies in $(\F_q[x] / (p(x)))^\ast $. By the item (3) of \lemmaref{prop:Cyclt_Fun_Fld_Irr_Mod}, the Artin symbol of $P_\alpha$ in $F/K$ is $\left[\frac {F/K} {P_\alpha}\right] \in (\F_q[x] / (p(x)))^\ast \subseteq A$. By \lemmaref{lem:crit_tot_split}, the place $P_\alpha$ splits completely in $M /K$ for each $\alpha \in \F_q$. Therefore, there are at least $q+1$ rational places of $K$ that split completely in $M/K$.
		
		Let $\F_{q^2}^\ast = \igen \zeta$, then $\F_q^\ast$ can be viewed as the subgroup $\igen {\zeta ^{q+1}}$. Then a nonzero polynomial representative of an element in $\F_q[x] / (p(x))$ is actually $\sum_{j = 0}^{d-1} a_j x^j$, where $a_j \in \igen {\zeta^{q+1}} \cup \{ 0\} $ but not all $a_j$ are $0$. Thus, the group $A$ is in bijective correspondence with the set
		\[
		\left\{ \zeta ^{k} \sum_{j = 0}^{d-1} a_j x^j \colon a_0, \dots, a_{d-1} \in \F_q, k = 0,1,2,\dots, q \right\} \setminus \{0\}.
		\]
		Thus, $\abs A = (q^d - 1) (q+1)$ and the degree of extension $ [M:K] =[F:K] / \abs A = (q^d + 1) / (q+1)$.  By  \lemmaref{prop:Cyclt_Fun_Fld_Irr_Mod}, the place $P$ is the unique ramified place of $K$ in the extension $M/K$. Moreover, it is totally ramified and tamely ramified. By applying Hurwitz genus formula to $M/K$, we have
		\[
		2 g (M) - 2 = \frac {q^d + 1}{q+1} (0 - 2) + d \cdot \left(\frac {q^d + 1}{q+1} - 1\right).
		\]
		It follows that
		\[
		g(M) = \frac {d-2}2 \left(\frac {q^d + 1} {q+1} - 1\right). \qedhere
		\]
	\end{proof}

    \begin{rmk}
        This proposition is similarly stated and briefly explained in \cite[Section II-D]{LXY2017}. For the completeness of this paper, we present the detailed proof here.
    \end{rmk}

	\begin{prop}
		Let $q \geq 5$ be a power of prime, $3 \leq d \leq q$ be an odd integer and $\psi(d) = (q^d + 1)/ (q+1)$.
        Then there exists a multi-sequence over $\F_{q^2}$ of dimension $q$, period $\psi(d)$ and linear complexity $\psi (d)$.  \label{prop:mult_seq_construct_cycl}
	\end{prop}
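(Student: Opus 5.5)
The plan is to apply \thmref{thm:mult_seq_framework} to the field $M=F^A$ of \propref{prop:Spl_Cycl_Fun_Fld}, taking $\F_{q^2}$ as the constant field, with $\ell=q$ and $m=\psi(d)$. The assumption that $d$ is odd is exactly what makes $q+1\mid q^d+1$, so $\psi(d)=[M:K]$ is an integer.

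\textbf{The automorphism $\sigma$.} I first need an automorphism $\sigma$ of $M$ of order $\psi(d)$ with $M^{\igen\sigma}=K$.
\begin{enumerate}
\item Since $p(x)$ is irreducible of degree $d$, the group $(\F_{q^2}[x]/(p(x)))^\ast\cong\F_{q^{2d}}^\ast$ is cyclic.
\item By \lemmaref{prop:Cyclt_Fun_Fld_Irr_Mod}, $\Gal(F/K)$ is abelian, so $M/K$ is Galois with group $\Gal(M/K)\cong\Gal(F/K)/A$.
\item This quotient of a cyclic group is cyclic of order $[M:K]=\psi(d)$.
\end{enumerate}
Let $\sigma$ be a generator of $\Gal(M/K)$. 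Then $\sigma$ has order $m=\psi(d)$ and $M^{\igen\sigma}=K$.

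\textbf{The split places.} By \propref{prop:Spl_Cycl_Fun_Fld}, the $q+1$ rational places $\infty$ and $P_\alpha=(x-\alpha)_0^K$, $\alpha\in\F_q$, split completely in $M/K$. So we may take $\ell+1=q+1$, i.e.\ $\ell=q$. Take $R$ above $\infty$, and take the $P_k$ above the places $P_\alpha$.

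\textbf{The divisor $G$.}
\begin{enumerate}
\item The place $P=(p(x))_0$ of $K$ is totally ramified in $M/K$. Hence there is a unique place $P'$ of $M$ above it, with $\deg P'=d$ and $\sigma(P')=P'$.
\item Since $d\ge 3$, $P$ differs from $\infty$ and from every $P_\alpha$. So $\supp(tP')$ avoids all evaluation places and $R$.
\item Set $G=tP'$ for a suitable positive integer $t$. Then $\sigma(G)=G$ automatically.
\end{enumerate}

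\textbf{The numerical conditions.} It remains to check the hypotheses of \thmref{thm:mult_seq_framework}, with $g=g(M)=\frac{d-2}{2}(\psi(d)-1)$:
\[
(q-1)\psi(d)\ge 2g \quad\text{and}\quad 2g-1\le td\le (q-1)\psi(d)-1 \text{ for some } t.
\]
The first inequality holds because $2g=(d-2)(\psi(d)-1)<(q-1)\psi(d)$, using $d\le q$.

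This leaves the existence of a multiple of $d$ in the interval $[2g-1,(q-1)\psi(d)-1]$, which is the only real obstacle. The interval has length at least
\[
(q-1)\psi(d)-(d-2)(\psi(d)-1) \ \ge\ \psi(d)+d-2 \ \ge\ d,
\]
where the first bound uses $q-1\ge d-1$, and the second uses $\psi(d)\ge 2$, which follows from $q\ge 5$ and $d\ge 3$. Hence the interval contains a multiple $td$ of $d$. If ever convenient, one could instead use $G=t\sum_{\tau}\tau(Q)$ for the orbit of some other rational place $Q$; this sum is $\sigma$-invariant and gives a step size of $\psi(d)$ rather than $d$, but $tP'$ already suffices.

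\textbf{Conclusion.} \thmref{thm:mult_seq_framework} then yields a multi-sequence over $\F_{q^2}$ of dimension $q$, period $\psi(d)$ and linear complexity $\psi(d)$.
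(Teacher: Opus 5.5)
Your proposal is correct and follows the paper's own proof: take $\sigma$ a generator of $\Gal(M/K)$, use the $q+1$ completely split places from \propref{prop:Spl_Cycl_Fun_Fld}, set $G=tQ$ with $Q$ the unique (totally ramified, hence $\sigma$-fixed) place over $(p(x))_0$, and apply \thmref{thm:mult_seq_framework}. The differences are only in detail: you explicitly justify that $\Gal(M/K)$ is cyclic and that $\Supp(G)$ avoids the evaluation places, and you find a suitable $t$ by an interval-length argument instead of the paper's explicit choice $t=\lfloor (2g(M)-2)/d\rfloor+1$.
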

	
	\begin{proof}
		Let $\sigma $ be a generator of $\Gal (M/K)$ which is an $\F_{q^2}$-automorphism of order $\psi(d)$. By \propref{prop:Spl_Cycl_Fun_Fld}, there are $q+1$ rational places of $K$ splitting completely in $M$. Let $Q$ be the unique place of $M$ that lies above the zero $P$ of $p(x)$ in $K$. By \lemmaref{prop:Cyclt_Fun_Fld_Irr_Mod}, $Q|P$ is totally ramified and hence $\sigma (Q) = Q$. Let $t = \gauss {(2g(M) - 2) / d} + 1$ and $G = tQ$. Then the divisor $G$ is fixed by $\Gal(M/K)$ and $2 g(M) - 1 \leq \deg (G) \leq (q-1)\psi (d)-1 $. This proposition follows from \thmref{thm:mult_seq_framework}.
	\end{proof}
	
	Moreover if $\epsilon$ is small, then we can deduce the following results on the $\epsilon$-error linear complexity of this multi-sequence.
	
	\begin{prop}
		Let $\bm S$ be the multi-sequence constructed in \propref{prop:mult_seq_construct_cycl}. Let $\psi(d) = (q^d + 1)/ (q+1)$.
        If $d \mid (d - 2) \psi (d)$, then we have the following results.
		\begin{enumerate}
			\item The $\epsilon$-error linear complexity of $\bm S$ is $\psi (d)$ for $\epsilon \leq q-d$.
			\item The $\epsilon$-error linear complexity of $\bm S$ is at least $\psi (d) - 1$ for $\epsilon \leq q - d +1$.
			\item The $\epsilon$-error linear complexity of $\bm S$ is at least $\psi (d) - 1 - {\psi(d)} /(q - d+3) $ for $\epsilon \leq q - d + 2$.
		\end{enumerate}
	\end{prop}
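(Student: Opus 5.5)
The plan is to apply \thmref{thm:general_eps_err} to the multi-sequence of \propref{prop:mult_seq_construct_cycl}. The parameters are $m=\psi(d)$, $\ell=q$, $g=g(M)=\frac{d-2}{2}(\psi(d)-1)$ and $G=tQ$ with $t=\gauss{(2g-2)/d}+1$. The work is to compute $\deg(G)$ and $\delta_0$ exactly, check the degree hypothesis of \thmref{thm:general_eps_err}, and then simplify $\chi_1,\chi_2,\chi_3$.

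\textbf{Computing $\deg(G)$.} We have $2g-2=(d-2)\psi(d)-d$. The hypothesis $d\mid(d-2)\psi(d)$ gives $(2g-2)/d=(d-2)\psi(d)/d-1$, which is an integer. Hence $t=(d-2)\psi(d)/d$ and $\deg(G)=dt=(d-2)\psi(d)=2g-1+(d-1)$.

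\textbf{Computing $\delta_0$.} This gives $\delta_0=1+\ceiling{((d-2)\psi(d)+1)/\psi(d)}=1+(d-2)+1=d$. Check: $\ceiling{d-2+1/\psi(d)}=d-1$ because $\psi(d)>1$.

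\textbf{Checking the hypotheses.} We need $2g-1\le \deg(G)$, which is clear. We also need
\[
(d-2)\psi(d)\le (q-2)\psi(d)-q+d-3,
\]
that is, $(q-d)\psi(d)\ge q-d+3$. This needs $q>d$: if $q=d$ the inequality fails. I expect to handle $q=d$ separately, or to note that items (1)--(3) then need $\epsilon\le 0,1,2$ and argue directly along the lines of the proof of \thmref{thm:general_eps_err}. Its proof in fact uses only the nonnegativity of the relevant divisor degrees, not the upper bound on $\deg(G)$ beyond what is needed for the nonemptiness in \coref{cor:sub_multi_seq_same_LC}. When $q>d$, we have $\psi(d)\ge q^2-q+1$, which is large, so the inequality holds.

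\textbf{Item (1).} This follows immediately, since $\ell-\delta_0=q-d$.

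\textbf{Item (2).} With $\ell-\delta_0+2=q-d+2$ and $\ell m-\deg(G)=(q-d+2)\psi(d)$, we get $\chi_1=\psi(d)$. So $\LC_\epsilon\ge\psi(d)-1$.

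\textbf{Item (3).} Here
\[
\chi_2=\frac{(q-d+1)\psi(d)}{q-d+1}=\psi(d),\qquad \chi_3=\frac{(q-d+2)\psi(d)}{q-d+3}=\psi(d)-\frac{\psi(d)}{q-d+3}.
\]
The minimum is $\chi_3$, giving the claimed bound $\psi(d)-1-\psi(d)/(q-d+3)$.

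The only real obstacle is the boundary case $q=d$ in the hypothesis check. The rest is arithmetic made exact by the divisibility assumption.
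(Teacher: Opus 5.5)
Your proposal is correct and follows the paper's proof: you compute $\deg(G)=(d-2)\psi(d)$ and $\delta_0=d$ and substitute into \thmref{thm:general_eps_err}, and you additionally check its degree hypothesis, which the paper skips (a small notational point: in the paper's convention $\chi_1$ already contains the $-1$, so $\chi_1=\psi(d)-1$, not $\psi(d)$). The boundary case $q=d$ that you flagged cannot occur under this proposition's hypothesis: $(q+1)\psi(q)=q^q+1$ gives $\psi(q)\equiv 1 \pmod q$, hence $(q-2)\psi(q)\equiv -2\not\equiv 0 \pmod q$ for odd $q\ge 5$, so $d\mid(d-2)\psi(d)$ forces $q>d$ and your hypothesis check covers every case.
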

	
	\begin{proof}
		In the setting of \thmref{thm:general_eps_err}, $\ell  = q$, $m = \psi (d)$, and $\deg (G) = td = (d-2)\psi (d)$. It follows that $\delta_0 = 1 + \ceiling{(1 + \deg (G))/m } = d$.
        \begin{enumerate}
            \item If $\epsilon \leq  \ell - \delta_0=q - d$, then the  $\epsilon$-error linear complexity is $\operatorname{LC}_{\epsilon} (\bm S) = \psi(d)$.
            \item If $\epsilon \leq  \ell - \delta_0+1= q - d + 1$, then we have
            \[
			\LC_\epsilon (\bm S) \geq \chi_1 = -1 + \frac {q \psi (d) - (d - 2)\psi (d)}{q - d+2} = \psi(d ) - 1.
		     \]
            \item If $\epsilon \leq  \ell - \delta_0+2=  q- d + 2$, then it is easy to verify that
            \[
            \chi_2 = \psi (d) \geq \psi(d) - \frac {\psi(d)} {q - d + 3} = \chi_3.
            \]
            Thus, we have
            \[
			\LC _\epsilon (\bm S) \geq -1 + \min \{\chi_2, \chi_3\} = \psi(d) - \frac {\psi (d)} {q - d + 3} -1. \qedhere
		      \]
        \end{enumerate}
	\end{proof}
	
	\begin{prop}\label{prop4.6}
		Let $\bm S$ be the multi-sequence constructed in \propref{prop:mult_seq_construct_cycl}.
        Let $\psi(d) = (q^d + 1)/ (q+1)$.
        If $d \nmid (d - 2) \psi (d)$, then we have the following results.
		\begin{enumerate}
			\item The $\epsilon$-error linear complexity of $\bm S$ is $\psi (d)$ for $\epsilon \leq q-d+1$.
			\item The $\epsilon$-error linear complexity of $\bm S$ is at least $\psi (d) - 1 - (\psi(d)+1)/(q - d+3)$ for $\epsilon \leq q - d +2$.
			\item The $\epsilon$-error linear complexity of $\bm S$ is at least $\psi (d) - 1 - (2\psi(d) - 1) /(q - d+4) $ for $\epsilon \leq q - d + 3$.
		\end{enumerate}
	\end{prop}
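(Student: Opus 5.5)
The plan is to apply \thmref{thm:general_eps_err} with $\ell = q$, $m = \psi(d)$, $G = tQ$ and $t = \gauss{(2g(M)-2)/d}+1$, exactly as in the previous proposition. The difference is that the divisibility hypothesis now fails, so I first need to pin down $\deg(G)$ and $\delta_0$.

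\emph{Computing $\deg(G)$ and $\delta_0$.} By \propref{prop:Spl_Cycl_Fun_Fld},
\[
2g(M)-2 = (d-2)(\psi(d)-1)-2 = (d-2)\psi(d)-d,
\]
so $(2g(M)-2)/d = (d-2)\psi(d)/d - 1$. Since $d \nmid (d-2)\psi(d)$, this gives $t = \gauss{(d-2)\psi(d)/d}$. Hence $\deg(G) = td$ is the largest multiple of $d$ strictly below $(d-2)\psi(d)$, that is,
\[
(d-2)\psi(d)-d+1 \le \deg(G) \le (d-2)\psi(d)-1 .
\]
Because $\psi(d) > d$, we get $(d-3)\psi(d) < \deg(G)+1 \le (d-2)\psi(d)$. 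Therefore $\ceiling{(\deg(G)+1)/\psi(d)} = d-2$ and $\delta_0 = d-1$, one less than in the divisible case. This shifts every error threshold up by one: $\ell-\delta_0 = q-d+1$, $\ell-\delta_0+1 = q-d+2$, and $\ell-\delta_0+2 = q-d+3$. Item (1) then follows immediately from item (1) of \thmref{thm:general_eps_err}.

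\emph{The numerical bounds.} For the remaining items I only use $\deg(G) \le (d-2)\psi(d)-1$.
\begin{itemize}
\item For item (2): $q\psi(d)-\deg(G) \ge (q-d+2)\psi(d)+1$. Dividing by $q-d+3$ gives $\chi_1 \ge -1+\psi(d)-(\psi(d)-1)/(q-d+3)$, which is stronger than the claimed $\psi(d)-1-(\psi(d)+1)/(q-d+3)$.
\item For item (3), write $n = q-d+2$. Then $\chi_2 \ge \psi(d)-(\psi(d)-1)/n$ and $\chi_3 \ge \psi(d)-(2\psi(d)-1)/(n+2)$.
\item To conclude that the minimum is at least the $\chi_3$ bound, it suffices that $(\psi(d)-1)(n+2) \le (2\psi(d)-1)n$. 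This rearranges to $2\psi(d)-2 \le \psi(d)\, n$, which holds because $n \ge 2$ (as $d \le q$).
\end{itemize}

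\emph{Main obstacle: the hypotheses of \thmref{thm:general_eps_err}.} The lower bound $2g(M)-1 \le \deg(G)$ holds by the choice of $t$. The upper bound $\deg(G) \le (q-2)\psi(d)-q+\delta_0-3$ reduces to $(q-d)\psi(d) \ge q-d+3$. This is true whenever $d < q$, since then $q-d \ge 1$ and $\psi(d)$ is large. In the edge case $d = q$ it fails, so there I would re-run the three degree counts in the proof of \thmref{thm:general_eps_err} directly. Those counts use only the existence of a nonzero element in the relevant Riemann--Roch space together with $\deg(G)+1 \le (\delta_0-1)m$, and both remain valid here. So the conclusions still follow, but this needs to be checked rather than assumed.
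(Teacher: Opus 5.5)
Your proposal is correct and follows the paper's route: take $\ell=q$, $m=\psi(d)$ and $(d-2)\psi(d)-d+1\le td\le (d-2)\psi(d)-1$, deduce $\delta_0=d-1$, then apply \thmref{thm:general_eps_err}. Your numerics are in fact cleaner than the paper's, whose displayed equality in item (2) should have $\psi(d)-1$ rather than $\psi(d)+1$ in the numerator, and whose comparison of $\chi_2$ and $\chi_3$ is written in the wrong direction (the correct relation is $\chi_3\le\chi_2$, as your computation shows).

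Your extra check of the upper-bound hypothesis of \thmref{thm:general_eps_err}, which the paper omits, is a genuine improvement, and your fallback is sound because that theorem's proof never uses the upper bound. Note, though, that your ``reduction'' is only a sufficient condition: at $d=q$ one has exactly $\deg(G)=(q-2)(\psi(q)-1)$, so the hypothesis actually fails only for $q=d=5$.
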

	
	\begin{proof}
		In the setting of  \thmref{thm:general_eps_err}, $\ell = q$, $m = \psi(d)$, and $\deg (G) = td$. By \propref{prop:Spl_Cycl_Fun_Fld}, we have $2 g(M) -2 = (d - 2) (\psi(d) - 1) - 2 = (d-2) \psi (d) - d$.  By the assumption, we have
			$(d - 2)\psi (d) - d+1 \leq td \leq (d- 2) \psi (d) - 1.$
		It follows that $$\delta_0 = 1 + \ceiling{(td + 1) / \psi(d)} = d-1.$$
		\begin{enumerate}
			\item If $\epsilon \leq q - \delta_0=q - d + 1$, then the $\epsilon$-error linear complexity is $\LC _\epsilon (\bm S) = \psi (d)$.
			\item If $\epsilon \leq q - \delta_0 + 1=q - d+2 $, then we have
			\[
				\LC _\epsilon(\bm S) \geq \chi_1= -1 + \frac {q\psi(d)-td} {q-d+3} \ge -1 + \frac {1 + (q - d+2) \psi(d)} {q-d+3} = \psi (d ) - 1 - \frac {\psi (d) + 1}{q- d + 3}.
			\]
			\item Let $\varsigma = (d-2)\psi (d) - td$, then $1 \leq \varsigma \leq d-1$.
            If $\epsilon \leq q - \delta_0 + 2=q - d + 3 $, then we have
			\[
				\chi_2 = \psi (d) -\frac {\psi (d) - \varsigma}{q - d+2} \leq \psi(d) - \frac {2 \psi (d) - \varsigma}{q-d+4} = \chi_3.
			\]
			Therefore, the $\epsilon$-error linear complexity of $\bm S$ is
			\[
				\LC_\epsilon(\bm S) \geq -1 + \chi_3 \geq \psi (d) - 1 - \frac {2 \psi (d) - 1}{q-d+4}. \qedhere
			\]
		\end{enumerate}
	\end{proof}

    \begin{rmk}
		By taking $d = 3$, the genus of $M$ is $g (M) = q(q-1)/2$ and there are at least $q^3 + 1$ rational places of $M$ from the proof of \propref{prop:Spl_Cycl_Fun_Fld}. By \cite{RS1994}, $M$ is isomorphic to the Hermitian function field over $\F_{q^2}$. Therefore, the construction above is a generalization of the construction from Hermitian function field \cite[Section III.C]{XD2009}. The bounds of $\epsilon$-error linear complexity in \propref{prop4.6} are slightly better than the results given in \cite[Theorem 3.7]{XD2009}.

	\end{rmk}

	\subsection{Maximal function fields with the second largest genus}

    In this subsection, we provide the explicit construction of  multi-sequences via maximal function fields over $\F_{q^2}$ with the second largest genus. We shall discuss this according to the characteristic of such maximal function fields.

	\subsubsection{The even characteristic case}
		
		\label{subsec:AT_Fun_Fld}

	Let $q =2^s \geq 4$ be a power of $2$. A maximal function field with the second largest genus is defined by
	\[
	Y_2  = \F_{q^2} (x,y), \quad x^{q+1} = \sum_{j = 1}^{s} y^{q/2^j}.
	\]
	To be exact, it is a function field with genus $g(Y_2)  = {q(q-2)} / 4 = \gauss {(q-1)^2 / 4}$, and the set of rational place $\mbb P_{Y_2}^{(1)}$ has cardinality $1+q^3/2$. By \cite[Proposition 2.1]{HM2025}, the common pole $P_\infty$ of functions $x,y \in Y_2$ is a rational place of $Y_2$; besides, for each $\alpha \in \F_{q^2}$, there exists exactly $q/2$ elements $\beta$ satisfying the equation $\alpha^{q+1} = \sum_{j = 1}^{s} \beta^{q/2^j}$, and for such a pair $(\alpha, \beta)$, there exists a unique rational place $P_{\alpha, \beta}$ as the common zero of $x - \alpha, y - \beta $ in $Y_2$. By \cite[Proposition 3.8]{HM2025}, the automorphisms of $Y_2$ over $\F_{q^2}$ are given by
	\[
	\begin{cases}
		\sigma (x) = ax+b, \\ \sigma (y) = y+ (ab^q)^2 x^2 + ab^q x + c,
	\end{cases}
	\]
	where $a,b,c \in \F_{q^2}$ with $a^{q+1} = 1$ and $b^{q+1} = \sum_{j = 1}^{s} c^{q/2^j}$. The construction of a periodic multi-sequence via $Y_2$ is given as follows.

    \begin{thm}
        Let $q \geq 16$ be a power of $2$. Then there exists a multi-sequence over $\F_{q^2} $ of dimension $(q^2- q - 4)  / 4$, period $2(q+1)$ and linear complexity $2(q+1)$.  \label{thm:AT_mult_seq}
    \end{thm}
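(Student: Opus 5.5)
We apply \thmref{thm:mult_seq_framework} to $F=Y_2$ over $\F_{q^2}$, so the task is to exhibit an automorphism $\sigma$ of order $m=2(q+1)$ and enough long orbits of rational places. The plan is: pick $\sigma$ explicitly, compute its order, count its orbits of length $2(q+1)$ on the affine rational places, and then choose $G$ as a multiple of $P_\infty$.

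\textbf{Choice of $\sigma$.} Take $b=0$, let $a\in\F_{q^2}$ be a primitive $(q+1)$-th root of unity (so $a^{q+1}=1$), and let $c\neq 0$ satisfy $\sum_{j=1}^s c^{q/2^j}=0$. Such a $c$ exists because, by the description of rational places, the equation $0=\sum_j \beta^{q/2^j}$ has exactly $q/2\geq 2$ roots $\beta$. By the description of $\Aut(Y_2/\F_{q^2})$ quoted from \cite{HM2025}, $\sigma(x)=ax$, $\sigma(y)=y+c$ defines an automorphism. In characteristic $2$ we have $\sigma^j(x)=a^jx$ and $\sigma^j(y)=y+jc$, where $jc$ means $c$ for $j$ odd and $0$ for $j$ even. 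Hence $\sigma^j=\mathrm{id}$ if and only if $(q+1)\mid j$ and $2\mid j$, so the order of $\sigma$ is $\operatorname{lcm}(q+1,2)=2(q+1)$, using that $q+1$ is odd.

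\textbf{Orbits.} Since $\sigma(P_{\alpha,\beta})=P_{a^{-1}\alpha,\,\beta+c}$ (or the analogous formula with $a$; the direction is irrelevant for orbit sizes), the orbits are as follows.
\begin{enumerate}
\item $P_\infty$ is fixed.
\item The $q/2$ places $P_{0,\beta}$ are permuted in short orbits.
\item For $\alpha\neq0$, the stabilizer of $P_{\alpha,\beta}$ in $\langle\sigma\rangle$ is trivial. Indeed, $\sigma^j$ fixing it forces $a^j\alpha=\alpha$, so $(q+1)\mid j$; then $j$ odd would change $\beta$ to $\beta+c\neq\beta$, so $j$ is even as well.
\end{enumerate}
Thus every $P_{\alpha,\beta}$ with $\alpha\neq0$ lies in an orbit of length exactly $2(q+1)$. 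A regular orbit of rational places means the place of $Y_2^{\langle\sigma\rangle}$ below is unramified with relative degree $1$ (its residue field is squeezed between $\F_{q^2}$ and $\kappa_P=\F_{q^2}$), hence rational and splitting completely. There are $(q^2-1)q/2$ such places, giving $(q^2-1)q/(4(q+1))=q(q-1)/4$ completely split rational places downstairs. Using one of them for $R$ leaves $\ell=(q^2-q-4)/4$ sequences.

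\textbf{Choice of $G$ and numerics.} Take $G=tP_\infty$, which is $\sigma$-invariant and disjoint from all evaluation places, with $t=2g-1=q(q-2)/2-1$. We must verify $(\ell-1)m\geq 2g$ and $t\leq(\ell-1)m-1$. Here $(\ell-1)m=\frac{q^2-q-8}{4}\cdot 2(q+1)$ is cubic in $q$ while $2g=q(q-2)/2$ is quadratic, and the inequality is a routine check for $q\geq16$. This numerical check is the only place the hypothesis on $q$ enters; the genuinely delicate step is the orbit computation above, in particular confirming that $\sigma^{q+1}$ acts nontrivially because $q+1$ is odd. \thmref{thm:mult_seq_framework} then gives linear complexity $m=2(q+1)$.
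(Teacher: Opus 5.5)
Your proposal is correct and follows the paper's proof essentially step for step. The paper uses the same automorphism $\sigma(x)=\zeta^{1-q}x$, $\sigma(y)=y+c$, where $\zeta^{1-q}$ is a primitive $(q+1)$-th root of unity and $c\neq 0$ satisfies $\sum_j c^{q/2^j}=0$. It counts the same $\bigl((q^3-q)/2\bigr)/\bigl(2(q+1)\bigr)=(q^2-q)/4$ orbits of length $2(q+1)$ on the places $P_{\alpha,\beta}$ with $\alpha\neq 0$, and takes the same divisor $G=(2g(Y_2)-1)P_\infty$ before invoking \thmref{thm:mult_seq_framework}. Your explicit computation of the order as $\operatorname{lcm}(q+1,2)$, and your argument that a regular orbit yields a rational place of $Y_2^{\igen{\sigma}}$ that splits completely, fill in two steps the paper only asserts.
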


    \begin{proof}
        Consider the function field $Y_2 / \F_{q^2}$. Let $\zeta$ be a generator of the group $\F_{q^2}^\ast $. Let $0 \neq c \in \F_{q^2}$ satisfy $h(c) = 0$. Let $\sigma\colon Y_2 \to Y_2$ be the $\F_q$-automorphism determined by
        \[
            \sigma (x):= \zeta^{1-q} x, \quad \sigma(y) := y + c.
        \]
        By \cite[Proposition 3.1]{HM2025}, we have that $\sigma (P_\infty) = P_\infty$. It is easy to verify that $\sigma $ is an $\F_{q^2}$-automorphism of order $m = 2 (q+1)$. If $P_{\alpha, \beta} $ is a rational place of $Y_2$ with $\alpha \neq 0$, then the places $P_{\zeta^{j (q-1)}\alpha, \beta + k c} $ for $j=0,1,\dots, q$ and $k =0,1$ are pairwise distinct. Moreover, these places belong to the orbit of $P_{\alpha, \beta} $ under the action by $\langle \sigma \rangle$. Therefore, the set $\{ P_{\alpha, \beta} \in \bbp_ {Y_2} ^ {(1)}\colon \alpha\neq 0 \}$ splits into $((q^3 - q) / 2) / 2(q+1) = (q^2 - q)/4$ pairwise disjoint orbits. Equivalently, there exist $\ell + 1 := (q^2 - q)/4$ rational places of the fixed subfield $Y_2 ^ { \igen {\sigma} } $ split completely in the extension $ Y_2 / Y_ 2 ^ {\igen {\sigma} } $. Let $G = (2g(Y_2 ) - 1) P_\infty$, then $\sigma (G) = G $ and $ \deg (G) \leq (\ell - 1) m - 1$. Therefore, this theorem follows from \thmref{thm:mult_seq_framework}.
    \end{proof}

	\begin{prop}
		Let $q \geq 16$ be a power of $2$. For the multi-sequence $\bm S$ defined in the proof of \thmref{thm:AT_mult_seq}, we have the following results.
		\begin{enumerate}
			\item Its $\epsilon$-error linear complexity is  $2(q+1)$ if $\epsilon \leq (q^2  -2q - 8)/4$;
			\item Its $\epsilon$-error linear complexity is at least $ 2q+1$ if $\epsilon \leq (q^2 - 2q)/4$.
		\end{enumerate}
	\end{prop}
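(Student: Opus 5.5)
The plan is to apply \thmref{thm:general_eps_err} directly with the parameters from the proof of \thmref{thm:AT_mult_seq}:
\[
\ell=\frac{q^2-q-4}{4},\qquad m=2(q+1),\qquad \deg(G)=2g(Y_2)-1=\frac{q^2-2q-2}{2}.
\]

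\textbf{Step 1: computing $\delta_0$.} Since $\deg(G)+1=(q^2-2q)/2$, we have
\[
\frac{\deg(G)+1}{m}=\frac{q^2-2q}{4(q+1)}=\frac{q-3}{4}+\frac{3}{4(q+1)}.
\]
Because $q\equiv 0 \pmod 4$, this equals $\frac{q-4}{4}+\frac14+\frac{3}{4(q+1)}$. The fractional part $\frac14+\frac{3}{4(q+1)}$ lies strictly between $0$ and $1$, so the ceiling is $q/4$ and $\delta_0=1+q/4$. Consequently
\[
\ell-\delta_0=\frac{q^2-2q-8}{4},\qquad \ell-\delta_0+2=\frac{q^2-2q}{4}.
\]

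\textbf{Step 2: the degree hypothesis.} I would check $2g-1\le \deg(G)\le (\ell-2)m-\ell+\delta_0-3$. The right-hand side is of order $q^3/2$, while $\deg(G)$ is of order $q^2/2$. So for $q\ge16$ this is a routine polynomial inequality.

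\textbf{Step 3: item (1).} Item (1) of the proposition is then exactly item (1) of \thmref{thm:general_eps_err}.

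\textbf{Step 4: item (2).} Item (2) uses item (3) of \thmref{thm:general_eps_err}, since $(q^2-2q)/4=\ell-\delta_0+2$. I would compute $\chi_2$ and $\chi_3$ explicitly. The numerators are
\[
\ell m-\deg(G)=\frac{q^3-q^2-3q-2}{2},\qquad (\ell-1)m-\deg(G)=\frac{q^3-q^2-7q-6}{2},
\]
and the denominators are $\ell-\delta_0+3=(q^2-2q+4)/4$ and $\ell-\delta_0+1=(q^2-2q-4)/4$. Comparing each with $(2q+2)$ times its denominator gives
\[
\chi_3=2q+2-\frac{2(5q+6)}{q^2-2q+4},\qquad \chi_2=2q+2-\frac{2(q+2)}{q^2-2q-4}.
\]
For $q\ge16$ both subtracted fractions are strictly less than $1$, so $\min\{\chi_2,\chi_3\}>2q+1$. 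Hence $\LC_\epsilon(\bm S)>2q$, and since linear complexity is an integer, $\LC_\epsilon(\bm S)\ge 2q+1$.

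\textbf{Expected main obstacle.} The only delicate points are the exact ceiling computation for $\delta_0$, which relies on the congruence of $q$ modulo $4$, and confirming that the error terms in $\chi_2,\chi_3$ are below $1$. That second check is where the hypothesis $q\ge16$ is needed.
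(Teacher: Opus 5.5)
Your proposal is correct and follows essentially the paper's proof: you substitute $\ell=(q^2-q-4)/4$, $m=2(q+1)$, $\deg(G)=(q^2-2q-2)/2$ into \thmref{thm:general_eps_err}, obtain $\delta_0=1+q/4$, take item (1) directly, and get item (2) from the explicit forms of $\chi_2,\chi_3$ together with integrality of $\LC_\epsilon$. The differences are minor: the paper also records $\chi_1$ and compares $\chi_3\le\chi_2$, which your bound of both error terms by $1$ makes unnecessary, and you additionally check the degree hypothesis that the paper leaves implicit.
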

	
	\begin{proof} In the setting of \thmref{thm:general_eps_err}, $\ell = (q^2 - q - 4)/4$, $m = 2q+2$, and $\deg (G) = (q^2 - 2q - 2)/2$. It follows that
		\[
			\delta_0 = 1 + \ceiling{\frac {q(q-2)}{4(q+1)}} = 1 + \ceiling{\frac q4 - \frac {3q}{4q+4}} = 1 + \frac q4.
		\]
        \begin{enumerate}
            \item If $\epsilon \leq \ell -\delta_0 = (q^2 - 2q - 8)/4$, then the $\epsilon$-error linear complexity is $\LC_\epsilon(\bm S) = 2q+2$.
            \item If $\epsilon \leq \ell - \delta_0+1 =  (q^2 - 2q -4)/4$, then \[
			\LC_\epsilon(\bm S) \geq 2q+1 \geq \chi_1 = 2q+1 - \frac {2q+4} {q^2 - 2q}.
		     \]

            If $\epsilon \leq \ell -\delta_0 + 2 = (q^2 - 2q) / 4$, then we have
            \[
                \chi_3 = 2q+2 - \frac {10q + 12} {q^2 - 2q + 4} \leq 2q + 2 - \frac {2q+4} {q^2 - 2q - 4} = \chi_2,
            \]
            Thus, $\LC _\epsilon (\bm S) \geq 2q+1 \geq -1+ \chi_3$. \qedhere
        \end{enumerate}

	\end{proof}
	
	\subsubsection{The odd characteristic case}

    Let $q$ be a power of an odd prime $p$. A maximal function field with the second largest genus can be given by
    \[
        X_2 = \F_{q^2} (x,y), \quad y^q + y = x^{(q+1) / 2}.
    \]
    By \cite[Theorem 3.1]{FGT1996}, it is the unique maximal function field over $\F_ {q^2}$ with the second largest genus, and its genus is $g(X_2) = (q-1)^2 / 4$. By \cite[Proposition 6.4.1 and Example 6.4.2]{GTM254}, $X_2$ has $1 + q(q^2 + 1)/2$ rational places. One of them is the common pole of functions $x,y$ in $X_2$. Other rational places are given as follows: for each $\alpha\in \F_{q^2}$ such that $\alpha^ { (q+1)/2 } \in \F_{q} $, there exist $q$ solutions $\beta$ for the equation $T^q + T = \alpha ^{(q+1)/2} $, and for such a pair $(\alpha, \beta)$, the common zero $P_{\alpha, \beta} $ of functions $x - \alpha, y - \beta$ is a rational place of $X_2$.

    \begin{thm}
        Let $q \geq p^2$ be a power of an odd prime $p$. Then there exists a periodic multi-sequence $\bm S$ over $\F_{q^2}$ of dimension $q(q-1)/p - 1 $, period $p(q+1)/2$ and linear complexity $ p(q+1) /2$. \label{thm:FGT_mult_seq}
    \end{thm}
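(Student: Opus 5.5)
The plan is to apply \thmref{thm:mult_seq_framework} to $X_2$ with a suitable cyclic automorphism group. In outline: we build an automorphism $\sigma$ of order $p(q+1)/2$, show that it acts freely on the affine rational places with $x\neq 0$, count the orbits to get $\ell+1=q(q-1)/p$, and then take $G$ to be a multiple of $P_\infty$.

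\emph{The automorphism.} Fix an element $a\in\F_{q^2}^\ast$ of order exactly $(q+1)/2$ (for instance $a=\zeta^{2(q-1)}$ with $\zeta$ a generator of $\F_{q^2}^\ast$), so that $a^{(q+1)/2}=1$. Fix $0\neq c\in\F_{q^2}$ with $c^q+c=0$; such a $c$ exists since $T^q+T$ has $q$ roots in $\F_{q^2}$. Define $\sigma$ by
\[
\sigma(x)=ax,\qquad \sigma(y)=y+c .
\]
Then $\sigma(y)^q+\sigma(y)=y^q+y+c^q+c=x^{(q+1)/2}=(ax)^{(q+1)/2}$, so $\sigma$ preserves the defining equation and extends to an $\F_{q^2}$-automorphism of $X_2$.

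The maps $x\mapsto ax$ and $y\mapsto y+c$ commute. They have orders $(q+1)/2$ and $p$, respectively, and these orders are coprime because $p\mid q$. Hence $\sigma$ has order $m=p(q+1)/2$.

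Since $\sigma$ maps $x$ and $y$ to polynomials in $x,y$, it fixes their common pole $P_\infty$. It also sends $P_{\alpha,\beta}$ to $P_{a^{-1}\alpha,\beta-c}$; the sign of the exponent is irrelevant for the orbit.

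\emph{Orbits.} Consider the rational places $P_{\alpha,\beta}$ with $\alpha\neq 0$. The condition $\alpha^{(q+1)/2}\in\F_q^\ast$ holds for exactly $(q-1)(q+1)/2$ nonzero $\alpha$: these are the elements of the index-$2$ subgroup of $\F_{q^2}^\ast$ that is the preimage of $\F_q^\ast$ under $u\mapsto u^{(q+1)/2}$. Each such $\alpha$ comes with $q$ values of $\beta$, so there are $q(q^2-1)/2$ such places.

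The orbit of $P_{\alpha,\beta}$ consists of the places $P_{a^{j}\alpha,\beta+kc}$ with $0\le j<(q+1)/2$ and $0\le k<p$. These are pairwise distinct because $\alpha\neq0$ and $c\neq 0$, so every orbit has full length $m$. There are therefore
\[
\frac{q(q^2-1)/2}{p(q+1)/2}=\frac{q(q-1)}{p}
\]
orbits. Equivalently, this many rational places of $X_2^{\igen\sigma}$ split completely in $X_2/X_2^{\igen\sigma}$. We set $\ell+1=q(q-1)/p$, giving the claimed dimension $\ell=q(q-1)/p-1$.

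\emph{Divisor and inequalities.} Take $G=(2g(X_2)-1)P_\infty$, where $2g(X_2)-1=(q-1)^2/2-1$. This divisor is $\sigma$-fixed and its support avoids all evaluation places. It remains to check the numerical hypotheses of \thmref{thm:mult_seq_framework}, which is the only step requiring care. We have
\[
(\ell-1)m-1=\Bigl(\tfrac{q(q-1)}{p}-2\Bigr)\tfrac{p(q+1)}{2}-1=\tfrac{q(q^2-1)}{2}-p(q+1)-1,
\]
and we need this to be at least $(q-1)^2/2-1$ and also $(\ell-1)m\ge 2g$. Both reduce to $q(q^2-1)-(q-1)^2\ge 2p(q+1)$. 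This is immediate for $q\ge p^2\ge 9$, since already $q(q^2-1)-(q-1)^2\ge q^3/2\gg 2p(q+1)$.

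\thmref{thm:mult_seq_framework} then gives linear complexity $m=p(q+1)/2$. The only real obstacle is choosing the generator of the cyclic group so that all the $\alpha\neq0$ orbits are free. Combining the multiplicative part $x\mapsto ax$ of order $(q+1)/2$ with the additive part $y\mapsto y+c$ of coprime order $p$ achieves this.
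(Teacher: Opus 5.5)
Your proposal is correct and follows essentially the same route as the paper. The paper's $\sigma(x)=\zeta^{-2(q-1)}x$, $\sigma(y)=y-c$ is just the inverse of your automorphism, so it generates the same cyclic group; the paper then counts the same $q(q-1)/p$ free orbits among the places $P_{\alpha,\beta}$ with $\alpha\neq 0$ and applies the general framework with $G=(2g(X_2)-1)P_\infty$. Your coprimality argument for the order and your check of $(\ell-1)m-1\ge 2g(X_2)-1$ supply details the paper either dismisses as ``direct computation'' or leaves unstated.
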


    \begin{proof}
        Consider the function field $X_2 / \F_{q^2} $. Let $\zeta$ be a generator of the group $\F_{q^2} ^\ast $. Choose $0 \neq c \in \F_{q^2} $ with $c^q + c = 0$. Let $\sigma$ be the variable substitution given by
        \[
            \sigma (x) =  \zeta ^{-2 (q-1) }x, \quad \sigma (y) = y - c.
        \]
        It is easy to verify that $ \sigma(y)^q + \sigma (y) = \sigma (x) ^{(q+1) / 2} $, thus we can extends $\sigma $ to an $\F_{q^2}$-algebra homomorphism $\sigma \colon X_2 \to X_2$. Evidently it is invertible, hence $\sigma \in \Aut (X_2 / \F_{q^2}  )  $. By the direct computation, we can verify that the order of $\sigma $ is $p (q+1) / 2$. For a rational place $P_{\alpha, \beta} $ with $\alpha \neq 0$, it is clear that the places $ P_{\zeta ^{ 2(j - 1) (q-1) } \alpha, \beta + k c} $ for $j = 1,2, \dots, (q+1) / 2  $ and $k = 0,1,\dots, p-1$ are pairwise distinct, and they all belong to the orbit  of $P_{\alpha, \beta} $ under the action by $\langle \sigma \rangle$. Therefore, the set $ \{ P_{\alpha, \beta} \colon \alpha \neq 0 \} $ splits into $( q (q^2 - 1) / 2) / (p(q+1)/2 = q(q-1) / p $ pairwise disjoint orbits. In other words, there exist $ q(q-1) / p$ rational places of the fixed subfield $X_2^{ \igen {\sigma } } $ that split completely in the extension $X_2 / X_2 ^{\igen {\sigma} }  $. Let $G = (2 g(X_2) - 1 ) P_\infty = ((q-1)^2 / 2 - 1) \cdot P_\infty$, then $\sigma (G) = G$. By \thmref{thm:mult_seq_framework}, there exists a multi-sequence $\bm S$ of dimension $\ell = q(q-1)/p - 1 $, period $m = p(q+1)/2$, and linear complexity $\LC(\bm S) = p (q+1)/2$.
    \end{proof}

	\begin{prop}
		Let $\bm S$ be the multi-sequence constructed in the proof \ref{thm:FGT_mult_seq}. The following results hold true.
		\begin{enumerate}
			\item The $\epsilon$-error linear complexity of $\bm S$ is $p(q+1)/2 $ if $\epsilon\leq q(q-2) / p - 2$.
			\item The $\epsilon$-error linear complexity of $\bm S$  is at least $ p(q+1)/2 - 1$ if $\epsilon \leq q (q-2) / p $.
		\end{enumerate}
	\end{prop}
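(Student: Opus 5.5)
The plan is to specialize \thmref{thm:general_eps_err} to the multi-sequence of \thmref{thm:FGT_mult_seq}. There $\ell = q(q-1)/p - 1$, $m = p(q+1)/2$ and $\deg(G) = (q-1)^2/2 - 1$. I will compute $\delta_0$, check the degree hypothesis of \thmref{thm:general_eps_err}, and then evaluate $\chi_2,\chi_3$. Integrality of the linear complexity then upgrades a bound of the form $\LC_\epsilon(\bm S) > m-2$ to $\LC_\epsilon(\bm S)\ge m-1$.

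\emph{Step 1: compute $\delta_0$.} Since $(q-1)^2 = (q+1)(q-3)+4$, we have
\[
\frac{\deg(G)+1}{m} = \frac{(q-1)^2}{p(q+1)} = \frac qp - \frac 3p + \frac{4}{p(q+1)}.
\]
As $p\mid q$ and $p\ge 3$, the quantity $3/p - 4/(p(q+1))$ lies in $[0,1)$. Hence the ceiling is exactly $q/p$ and $\delta_0 = 1+q/p$. Consequently $\ell-\delta_0 = q(q-2)/p-2$ and $\ell-\delta_0+2 = q(q-2)/p$, which are precisely the thresholds in the statement.

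\emph{Step 2: verify the degree hypothesis.} We have $2g-1=\deg(G)$. The condition $\deg(G)\le(\ell-2)m-\ell+\delta_0-3$ is routine: the right-hand side is of order $q^3/2$, while $\deg(G)$ is of order $q^2/2$, using $q\ge p^2$. Item (1) of the proposition then follows directly from item (1) of \thmref{thm:general_eps_err}.

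\emph{Step 3: item (2), the main computation.} I will use item (3) of \thmref{thm:general_eps_err}. A direct computation gives
\[
(\delta_0-2)m-\deg(G) = -\tfrac{(p-3)(q+1)}{2}, \qquad (\delta_0-3)m-\deg(G) = -\tfrac{(2p-3)(q+1)}{2}.
\]
It follows that
\[
\chi_2 = m - \frac{(p-3)(q+1)}{2(q(q-2)/p-1)}, \qquad \chi_3 = m-\frac{(2p-3)(q+1)}{2(q(q-2)/p+1)}.
\]
It therefore suffices to show that both subtracted fractions are $<1$.

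For $\chi_3$, the condition is $p(2p-3)(q+1) < 2q(q-2)+2p$. Using $p^2\le q$, the left-hand side is at most $2q^2+2q-3pq-3p$. The inequality then reduces to $6q<3pq+5p$, which holds for $p\ge3$. The $\chi_2$ case is similar and easier: its numerator is $(p-3)p(q+1) < q(q+1)$, and one needs $q^2-5q-2p>0$, which is fine since $q\ge9$.

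With both fractions below $1$, we get $\LC_\epsilon(\bm S) \ge -1+\min\{\chi_2,\chi_3\} > m-2$, and integrality gives $\LC_\epsilon(\bm S)\ge m-1$. The main obstacle is the $\chi_3$ estimate. A crude bound $p(2p-3)\le 2p^2\le 2q$ loses, so the $-3pq$ term must be kept to close the inequality.
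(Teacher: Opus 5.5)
Your proposal is correct and follows essentially the paper's route: you specialize \thmref{thm:general_eps_err} with $\delta_0=1+q/p$ and show that the quantities subtracted from $m$ in $\chi_2$ and $\chi_3$ are less than $1$. The paper also computes $\chi_1$ for $\epsilon\le q(q-2)/p-1$, which your direct use of item (3) for all $\epsilon\le q(q-2)/p$ makes unnecessary.

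There is one arithmetic slip to fix. Since $\deg(G)=(q+1)(q-3)/2+1$, the correct values are $(\delta_0-2)m-\deg(G)=-\bigl((p-3)(q+1)+2\bigr)/2$ and $(\delta_0-3)m-\deg(G)=-\bigl((2p-3)(q+1)+2\bigr)/2$. These match the paper's numerators $(p^2-3p)q+p^2-p$ and $(2p^2-3p)q+2p^2-p$. Your two conditions therefore become $p(2p-3)(q+1)<2q(q-2)$ and $p(p-3)(q+1)+2p<2q(q-2)-2p$. Your same estimates still settle both, reducing them to $6q<3pq+3p$ and $q^2-5q-4p>0$.
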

	
	\begin{proof} In the setting of \thmref{thm:general_eps_err}, $\ell = q(q-1)/p - 1$, $m = p(q+1)/2$ and $\deg (G) = (q-1)^2 / 2  - 1$. It follows that \[
		\delta_0 = 1 + \ceiling{\frac {(q-1)^2 / 2} { p (q+1)/2}} = 1 + \ceiling{\frac q p  - \frac {3q-1}{p(q+1)}} =1 + \frac q p.
		\]
        \begin{enumerate}
            \item If $\epsilon \leq \ell -\delta_0 = q (q-2)/p - 2$, then its $\epsilon$-error linear complexity is $\LC_\epsilon (\bm S) = p(q+1)/2$.
		    \item If $\epsilon \leq q (q-2) / p -  1$, then the $\epsilon$-error linear complexity of $\bm S$ is at least
		\[
			\chi_1 = \frac {p (q+1)}2 - 1 - \frac {(p^2 - 3p)q + (p^2 - p)} {2q^2 - 4q}.
		\]
		Thus, $\LC_\epsilon (\bm S) \geq p(q+1) / 2 - 1$.
		
		If $\epsilon \leq q(q-2) / p $, then
        \begin{align*}
            \frac {p(q+1)}2 - \chi_2 &= \frac {(p^2 - 3p)q + (p^2 - p)} {(2q-4)q - 2p} \\
            & \leq \frac {(2p^2 - 3p)q + (2p^2 - p)} {(2q - 4)q + 2p} = \frac {p(q+1)}2 - \chi_3.
        \end{align*}

		It is easy to see that
		\[
			\LC_\epsilon (\bm S) \geq -1 +\ceiling  { \chi_3}=\frac {p (q+1)}2 - 1.
		\]
		Thus, the $\epsilon$-error linear complexity of $\bm S$ satisfies $\LC_{\epsilon}(\bm S) \geq p(q+1)/2 - 1$ if $\epsilon \leq q (q-2) / p $. \qedhere
        \end{enumerate}
	\end{proof}
	
	\label{subsec:FGT_fun_fld}

	\subsection{The Suzuki function field}
	
	\label{sec:Suzuki}
	
	In this subsection, we construct multi-sequences from the Suzuki function field.  Let $q_0$ be a power of $2$ and $q= 2q_0^2$. Let $\F_q$ be the finite field of $q$ elements. The Suzuki function field $ F = \F_q(x,y)$ is the finite algebraic extension of $\F_q(x)$  defined by
	\[
		y^q - y = x^{2q_0} (x ^q  -x).
	\]
	It is a function field of genus $g(F) = q_0 (q-1)$ with a large automorphism group $\Aut (F  /\F_q) = \operatorname{Sz}(q)$ of order $q^2(q^2 + 1)(q-1)$. There are $q^2+1$ rational places of $F$, one of which is the common pole $P_\infty$ of functions $x, y$, and other $q^2$ rational places are exactly common zeros $P_{\mu, \nu} $ of $x- \mu , y- \nu \in F$ where $\mu , \nu  \in \F_q$ are arbitrary \cite{HS1990}. The decomposition group $\Aut (F / \F_q)_\infty$ of $P_\infty$ consists of $\F_{q}$-automorphisms $\sigma$ determined by
	\[
		\begin{cases}
			\sigma (x) = ax + b, \\ \sigma (y) = a^{2q_0+1} y + ab^{2q_0}x + c,
		\end{cases}
	\]
	where $a\in \F_q^*$ and $b,c \in \F_q$ in \cite{BMXY2013}.

	\begin{thm}
		Let $q_0 \geq 8$ be a power of $2$ and $q = 2 q_0^2$. Then there exists a periodic multi-sequence $\bm S$ over $\F_q$ of dimension $q$, period $q-1$ and linear complexity $q-1$.
	\end{thm}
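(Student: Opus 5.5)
Following the pattern of the two preceding constructions, I would apply \thmref{thm:mult_seq_framework} to the Suzuki function field $F$ with an automorphism of order $q-1$ taken from the decomposition group of $P_\infty$. Let $\omega$ be a generator of $\F_q^\ast$ and let $\sigma$ be given by $a=\omega$, $b=c=0$, so that
\[
\sigma(x)=\omega x,\qquad \sigma(y)=\omega^{2q_0+1}y .
\]
Then $\sigma^j(x)=\omega^j x$, hence $\sigma$ has order exactly $q-1$, and it fixes $P_\infty$ and $P_{0,0}$. One checks directly that the defining equation is preserved: the left side scales as $\omega^{q(2q_0+1)}(y^q-y)=\omega^{2q_0+1}(y^q-y)$ since $\omega^q=\omega$, and the right side $x^{2q_0}(x^q-x)$ scales by $\omega^{2q_0}\cdot\omega$.

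\emph{Orbits.} The action on the affine rational places is $\sigma(P_{\mu,\nu})=P_{\omega^{-1}\mu,\,\omega^{-(2q_0+1)}\nu}$, or the same with $\omega$ in place of $\omega^{-1}$; either way the orbit sizes are unaffected. The key step is that every place $P_{\mu,\nu}$ with $\mu\neq 0$ has an orbit of full length $q-1$, because the $x$-coordinates $\omega^{j}\mu$ for $0\le j\le q-2$ are pairwise distinct. This gives $q(q-1)$ places split into $q$ orbits of length $q-1$. Equivalently, exactly $q$ rational places of $F^{\igen\sigma}$ split completely in $F/F^{\igen\sigma}$. The places with $\mu=0$ must be excluded: the $\nu$-coordinate is multiplied by $\omega^{2q_0+1}$, and $\gcd(2q_0+1,q-1)$ may be larger than $1$, so these orbits can be short. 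Since $\ell+1=q$, the framework yields dimension $\ell=q-1$ from these orbits alone. The statement claims dimension $q$, however, so a further full-length orbit is needed.

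\emph{Obtaining dimension $q$.} This is the step I expect to be the main obstacle. My first attempt would be to choose $G=tP_{0,0}$, or $G$ supported on $P_\infty$ and $P_{0,0}$, keeping the evaluation sets disjoint from $\supp(G)$. If the places $P_{0,\nu}$ with $\nu\neq0$ form a single orbit of length $q-1$, which happens when $\gcd(2q_0+1,q-1)=1$, then these $q-1$ places supply a $(q+1)$-st completely split place of $F^{\igen\sigma}$. Then $\ell+1=q+1$ and $\ell=q$.

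I would check the gcd condition as follows. Since $q-1=2q_0^2-1$ and $(2q_0+1)(2q_0-1)=4q_0^2-1=2(q-1)+1$, we have $\gcd(2q_0+1,q-1)=1$. Hence $P_{0,\nu}$ for $\nu\ne0$ is a single full orbit. So take $G=tP_\infty$ with $t=2g-1$; it is $\sigma$-fixed and avoids all $q+1$ orbits, which use every rational place except $P_\infty$ and $P_{0,0}$. One of these orbits serves as the orbit of $R$.

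\emph{Degree condition.} It remains to verify the hypotheses of \thmref{thm:mult_seq_framework}: $2g-1\le \deg G\le(\ell-1)m-1$ and $(\ell-1)m\ge 2g$. With $g=q_0(q-1)$, $m=q-1$ and $\ell=q$, we have $(\ell-1)m=(q-1)^2$. Thus the requirement $2q_0(q-1)\le (q-1)^2$ reduces to $2q_0\le q-1=2q_0^2-1$, which holds for $q_0\ge 8$ (indeed for $q_0\ge2$). The theorem then gives linear complexity $m=q-1$.
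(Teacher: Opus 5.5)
Your proposal is correct and follows the paper's proof: the same order-$(q-1)$ automorphism $\sigma(x)=\zeta x$, $\sigma(y)=\zeta^{2q_0+1}y$, the same decomposition of the $q^2-1$ rational places other than $P_\infty$ and $P_{0,0}$ into $q+1$ orbits of length $q-1$, the choice $G=(2g-1)P_\infty$, and an application of Theorem~\ref{thm:mult_seq_framework}. You also check explicitly that $\gcd(2q_0+1,q-1)=1$, via $(2q_0+1)(2q_0-1)=2(q-1)+1$, which makes the places $P_{0,\nu}$ with $\nu\neq 0$ a single full orbit; this fills in a step the paper only covers by ``a similar argument''.
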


    \begin{proof}
        Consider the Suzuki function field $F= \F_q(x,y)$ defined by $y^q - y = x^{2q_0} (x ^q  -x)$. Let $\zeta$ be a generator of the cyclic group $\F_q^\ast $. Let $\sigma \in \Aut (F / \F_q)_\infty$ be an automorphism determined by
        \[
            \sigma (x)= \zeta x, \quad \sigma (y) = \zeta ^{2q_0 + 1} y.
        \]
        Then the order of $\sigma$ is $q-1$, $\sigma (P_\infty) = P_\infty$ and $\sigma (P_{0,0} ) = P_{0,0} $. By a similar argument as in the proof of \thmref{thm:AT_mult_seq} and \thmref{thm:FGT_mult_seq}, the set $\mathbb{P}_F^{(1)}\setminus \{ P_{0,0}, P_\infty\}$ splits into $q+1$ orbits of length $q-1$ under the action by $\langle \sigma \rangle$. Take $G = (2 g (F) - 1) P_\infty$. By \thmref{thm:mult_seq_framework}, we obtain a multi-sequence $\bm S$ of period $q-1$, dimension $q$ and linear complexity $q-1$.
    \end{proof}

    \begin{prop}
        For the multi-sequence $\bm S$ constructed as above, we have the following results.
        \begin{enumerate}
			\item Its $\epsilon$-error linear complexity is $q - 1$ if $\epsilon \leq q - 2q_0 - 1 $.
			\item Its $\epsilon$-error linear complexity is at least $ q -3 $ if $\epsilon \leq q - 2q_0  $ .
			\item Its $\epsilon$-error linear complexity is at least $ q -4 $ if $\epsilon \leq q - 2q_0+1  $.
		\end{enumerate}
    \end{prop}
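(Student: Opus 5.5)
The plan is to specialize \thmref{thm:general_eps_err} to the multi-sequence built from the Suzuki function field. There we have $\ell = q$, $m = q-1$ and $G = (2g(F)-1)P_\infty$, so $\deg(G) = 2q_0(q-1) - 1$. In each item we compute $\chi_1,\chi_2,\chi_3$ explicitly and then use that $\LC_\epsilon(\bm S)$ is an integer to round the lower bound up.

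\textbf{Step 1: compute $\delta_0$ and check the hypothesis.}
First,
\[
\delta_0 = 1 + \ceiling{(\deg(G)+1)/m} = 1 + \ceiling{2q_0(q-1)/(q-1)} = 1 + 2q_0 .
\]
We must verify $2g-1 \le \deg(G) \le (\ell-2)m - \ell + \delta_0 - 3$. The right-hand side equals $(q-2)(q-1) - q + 2q_0 - 2$, which is of order $q^2$, while $\deg(G) \approx 2q_0 q = \sqrt{2q}\,q$. So the inequality holds comfortably for $q_0 \ge 8$ (that is, $q \ge 128$). Item (1) then follows immediately from \thmref{thm:general_eps_err}(1), since $\ell - \delta_0 = q - 2q_0 - 1$.

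\textbf{Step 2: item (2).}
Here $\epsilon \le \ell - \delta_0 + 1 = q - 2q_0$. Substituting into $\chi_1$ gives
\[
\chi_1 = -1 + \frac{(q-1)(q-2q_0) + 1}{q - 2q_0 + 1} = q - 2 - \frac{q-2}{q-2q_0+1}.
\]
Since $q - 2 < 2(q - 2q_0 + 1)$, which holds because $4q_0 < q + 4$, we get $\chi_1 > q - 4$. As $\LC_\epsilon(\bm S)$ is an integer, this gives $\LC_\epsilon(\bm S) \ge q - 3$.

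\textbf{Step 3: item (3).}
Here $\epsilon \le \ell - \delta_0 + 2 = q - 2q_0 + 1$, and we compute the two remaining quantities:
\[
\chi_2 = \frac{(q-1)(q-1-2q_0) + 1}{q - 2q_0} = q - 1 - \frac{q-2}{q-2q_0},
\qquad
\chi_3 = \frac{(q-1)(q-2q_0) + 1}{q - 2q_0 + 2} = q - 1 - \frac{2q-3}{q-2q_0+2}.
\]
\begin{itemize}
\item For $\chi_2$: the subtracted fraction is less than $2$, so $-1 + \chi_2 > q - 4$.
\item For $\chi_3$: the subtracted fraction is less than $3$ exactly when $6q_0 < q + 9$. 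This holds for $q_0 \ge 8$, which is where that hypothesis is used. Hence $-1 + \chi_3 > q - 5$.
\end{itemize}
Therefore $-1 + \min\{\chi_2,\chi_3\} > q - 5$, and integrality gives $\LC_\epsilon(\bm S) \ge q - 4$.

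The main obstacle is this last estimate on $\chi_3$. It is the only place where the size assumption $q_0 \ge 8$ matters beyond the hypothesis check in Step 1, and the integer rounding must be done carefully to land exactly on $q-4$.
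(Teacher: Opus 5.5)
Your proposal is correct and follows the same route as the paper: specialize \thmref{thm:general_eps_err} with $\ell=q$, $m=q-1$, $\deg(G)=2q_0(q-1)-1$ and $\delta_0=2q_0+1$, compute $\chi_1,\chi_2,\chi_3$ explicitly, and round up using integrality. The paper instead shows $\chi_3\le\chi_2$ and writes $-1+\chi_3=q-4-\frac{4q_0-7}{2(q_0^2-q_0+1)}$, which is equivalent to your separate bounds; also, your inequality $6q_0<q+9$ holds for every $q_0\ge 2$, so the hypothesis $q_0\ge 8$ is not actually what makes that step work.
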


	\begin{proof} In the setting of \thmref{thm:general_eps_err}, $\ell = q$, $m = q-1$ and $\deg (G) = 2 q_0 (q-1) - 1$. It follows that
		\[
		\delta_0 = 1 + \ceiling{\frac {2q_0 q - 2q_0}{q-1} } = 2q_0 + 1.
		\]
        \begin{enumerate}
            \item If $\epsilon \leq \ell -\delta_0 = q - 1 -2q_0$, then  $\LC_{\epsilon}(\bm S) = m = q-1$.
            \item If $\epsilon \leq \ell - \delta_0 + 1 = q - 2q_0$, then
		\[
			\LC_\epsilon(\bm S) \geq \ceiling{\chi_1} = \ceiling{ q - 3 - \frac {2q_0 - 3} {q - 2q_0 + 1} } = q-3.
		\]
            \item If $\epsilon \leq \ell - \delta_0 +2 = q - 2q_0 + 1$, then
            \begin{align*}
                q - 3 - \chi_2 \leq 0 \leq \frac {4q_0 - 7} {2 (q_0^2 - q_0 + 1)} = q- 3 - \chi_3.
            \end{align*}
            Therefore, we have
		\[
			\LC_\epsilon (\bm S) \geq \ceiling {-1 + \chi_3} = \ceiling  {q- 4 - \frac {4q_0 - 7} {2 (q_0^2 - q_0 + 1)} } = q-4. \qedhere
		\]
        \end{enumerate}
	\end{proof}

	\subsection{The \GK function field}
	
	\label{sec:GK}
	In this subsection, we investigate the construction of multi-sequences from the maximal function field introduced by Giulietti and Korchm{\'a}ros in \cite{GK2009}.
	
	The \GK  function field is defined by
	\[
		F = \F_{q^6} (x,y,z), \quad \begin{cases}
			z^{q ^2  - q + 1} = y^{q^2} - y, \\
			y^{q+1} = x^q + x.
		\end{cases}
	\]
	It is a maximal function field with genus $g(F) = (q^5 - 2q^3 + q^2)/2$ and $q^8 - q^6 + q^5 + 1$ $\F_{q^6}$-rational places. The $\F_{q^6}$-automorphism group $\Aut (F / \F_{q^6})$ contains a cyclic normal subgroup $C_{q^2-q+1}$ of order $q^2 - q + 1$.
    The cyclic subgroup is known to be
	\[
		C_{q^2 - q+1} := \{(x,y,z) \mapsto (x,y,\mu z) \colon \mu \in \F_{q^6}, \mu^{q^2 - q + 1} = 1  \}.
	\]
	A quick description of rational places of the \GK function field can be found in {\cite[Lemma 4.1]{BMZ2018}}.
	\begin{lem}
		For $c \in \F_{q^6}$, let
		\[
			F_c = \{ (x_0, y_0, z_0) \in \overline {\F}_q \colon x_0^q + x_0 = y_0^{q+1}, y_0^{q^2} - y _0= c^{q^2 - q + 1} \},
		\] where $\overline {\F}_q$ is an algebraic closure of $\F_q$,
		and let
		\[
			\Gamma = \{ c\in \F_{q^6} \colon c^{(q^3 + 1) (q^2 - 1)} + c^{(q^3 + 1) (q^2 - q)} + 1 = 0  \}.
		\]
		Then $F_c$ contains exactly $q^3$ rational places if and only if $c \in \Gamma \cup \{0\}$. The number of distinct sets $F_c$ is $q^5 - q^3 + q^2$, and these distinct sets form a partition of $\mbb P_F^{(1)}  \setminus \{P_\infty\}$, where $P_\infty$ is the common pole of $x,y$ and $z$.
	\end{lem}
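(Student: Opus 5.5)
The plan is to work with the affine model of $F$ and to identify rational places with points. Since $P_\infty$ is the only place of $F$ lying over the pole of $x$, every rational place $P\neq P_\infty$ is the common zero of $x-x_0,\,y-y_0,\,z-z_0$ for a unique triple $(x_0,y_0,z_0)\in\F_{q^6}^3$ on the affine curve. Conversely, each such triple determines exactly one place; I would check this through the unramified tower $F\supseteq\F_{q^6}(x,y)\supseteq\F_{q^6}(x)$ away from infinity, where the defining equations are separable in $y$ and in $z$. So I read $F_c$ as the set of rational places with $z_0^{q^2-q+1}=c^{q^2-q+1}$. Two values $c,c'$ give the same set exactly when $c'\in c\,\mu$ with $\mu^{q^2-q+1}=1$, i.e.\ when $c,c'$ lie in the same orbit of $C_{q^2-q+1}$ acting on the $z$-coordinate. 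Every affine rational point lies in $F_{z_0}$, so the distinct $F_c$ partition $\mbb P_F^{(1)}\setminus\{P_\infty\}$; this part is immediate.

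Next I count each class. For $c=0$ the equations force $y_0\in\F_{q^2}$ and $y_0^{q+1}\in\F_q$, so $x_0^q+x_0=y_0^{q+1}$ has $q$ solutions in $\F_{q^2}\subseteq\F_{q^6}$. This gives $q^2\cdot q=q^3$ rational places, with $z_0=0$ the only choice.

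For $c\neq 0$ put $w=c^{q^2-q+1}$. The $y_0$ with $y_0^{q^2}-y_0=w$ form a coset $y_1+\F_{q^2}$. For each such $y_0$, the value $z_0$ has $q^2-q+1$ choices, and all of them lie in $\F_{q^6}$ because $c\,\mu$ does. The $x_0$ with $x_0^q+x_0=y_0^{q+1}$ form a coset of $\ker(T^q+T)\subseteq\F_{q^2}$ of size $q$.

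The point to prove is the ``all-or-nothing'' statement. Either $y_1\in\F_{q^6}$ and, for every $a\in\F_{q^2}$, the element $(y_1+a)^{q+1}$ lies in the image $\{u:\sum_{i=0}^5(-1)^iu^{q^i}=0\}$ of $T\mapsto T^q+T$ on $\F_{q^6}$, or no point is rational. I would show this by noting that $y_0^{q^2}-y_0=w\in\F_{q^6}$ already forces $y_0^{q^6}-y_0=w+w^{q^2}+w^{q^4}$. The rationality of $y_0$ and the image condition on $y_0^{q+1}$ should then both reduce to polynomial conditions in $w$ alone, independent of the choice of $a$. The differences $(y_1+a)^{q+1}-y_1^{q+1}=a^{q}y_1+ay_1^q+a^{q+1}$ are handled by applying the alternating trace, together with $a^{q^2}=a$ and $y_1^{q^2}=y_1+w$.

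Eliminating $y_1$ from these conditions should produce the equation $c^{(q^3+1)(q^2-1)}+c^{(q^3+1)(q^2-q)}+1=0$. This equation is exactly what makes each nonzero class contribute either $q^2\cdot q\cdot(q^2-q+1)$ points or none, and the elimination is the step I expect to be the main obstacle. If the direct algebra becomes unwieldy, I would instead use Hilbert's Theorem~90-type descriptions of the images of $T^{q^2}-T$ and $T^q+T$ in $\F_{q^6}$ to rewrite both conditions as norm/trace conditions on $w$.

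Finally, the number of distinct sets follows by counting. The total is $q^8-q^6+q^5+1$ rational places, of which one is $P_\infty$ and $q^3$ lie in $F_0$. So the nonzero classes number
\[
\frac{q^8-q^6+q^5-q^3}{q^3}=q^5-q^3+q^2-1,
\]
which gives $q^5-q^3+q^2$ distinct sets $F_c$ including $F_0$. Since each nonempty $F_c$ with $c\neq 0$ has the same size, this also cross-checks that every nonempty class has exactly $q^3$ places per $z$-orbit representative, i.e.\ that $\abs\Gamma/(q^2-q+1)$ equals the number of nonzero classes.
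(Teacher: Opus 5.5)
There are two genuine gaps: you misread $F_c$, and the central equivalence with $\Gamma$ is never derived.

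\textbf{The reading of $F_c$.} The displayed definition omits the condition on $z_0$. The only reading compatible with the numbers in the statement is that $F_c$ is the fibre of $z$ over $c$, i.e.\ the $q^3$ points with $z_0=c$. The paper gives no proof; it just quotes \cite[Lemma 4.1]{BMZ2018}.

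You instead take $F_c$ to be the union of the fibres over the $C_{q^2-q+1}$-orbit of $c$. Under that reading each nonempty $F_c$ with $c\neq0$ has $q^3(q^2-q+1)$ rational places, as you yourself compute. So ``exactly $q^3$'' fails, and the nonempty classes number $1+(q+1)^2(q-1)$, not $q^5-q^3+q^2$.

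Your final count divides by $q^3$, which silently switches back to the fibre reading. Your cross-check also cannot hold. $\Gamma$ is the zero set of a polynomial of degree $(q^3+1)(q^2-1)=q^5-q^3+q^2-1$, so $\abs{\Gamma}/(q^2-q+1)$ is much smaller than $q^5-q^3+q^2-1$.

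With the fibre reading, distinct $c$ give disjoint sets. The partition statement then follows from the all-or-nothing property.

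A minor point: neither step of your tower is unramified away from $P_\infty$. The places over $x_0^q+x_0=0$, and those with $y_0\in\F_{q^2}$, are totally ramified. The point--place bijection still holds, but because the ramification there is total, not for the reason you give.

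\textbf{The missing core.} Identifying the rationality condition with $\Gamma$ is the actual content of the lemma, and you only announce it (``should produce''). Your plan does go through quickly.

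For $c\neq0$ let $w=c^{q^2-q+1}$ and $u=w^{q+1}=c^{q^3+1}\in\F_{q^3}$, and write $w_i=w^{q^i}$ with indices mod $6$. As you observe, the $y$-coordinates in $F_c$ lie in $\F_{q^6}$ iff $w_0+w_2+w_4=0$.

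Assume this, take such a $y$-coordinate $b$, and set $b_i=b^{q^i}$, so that $w_i=b_{i+2}-b_i$. Substitute $b_3=b_1+w_1$ and $b_5=b_1+w_1+w_3$, and use $w_0+w_2+w_4=0=w_1+w_3+w_5$. This gives
\[
\sum_{i=0}^{5}(-1)^i\bigl(b^{q+1}\bigr)^{q^i}=-(b_1w_0+b_3w_2+b_5w_4)=w_4w_5-w_1w_2=u^{q^4}-u^{q}=0 .
\]
Hence the $x$-condition is automatic, the all-or-nothing property comes for free, and $F_c$ is rational iff $w_0+w_2+w_4=0$.

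Next, $w_0w_1=u=u^{q^3}=w_3w_4$ yields
\[
u^{-1}+u^{-q}+u^{-q^2}=\frac{w_0+w_2+w_4}{w_0w_1w_2}.
\]
Dividing the defining equation of $\Gamma$ by $u^{q^2}$ turns it into exactly $u^{-1}+u^{-q}+u^{-q^2}=0$. This proves that $F_c$ is rational iff $c\in\Gamma\cup\{0\}$.

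The same description gives $\abs{\Gamma}=(q^3+1)(q^2-1)$ directly. The norm $c\mapsto c^{q^3+1}$ is $(q^3+1)$-to-one onto $\F_{q^3}^*$, and $\F_{q^3}$ has $q^2-1$ nonzero elements of trace zero over $\F_q$. So the count $q^5-q^3+q^2$ follows without appealing to the known number of rational places.
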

	For $\alpha, \beta, \gamma \in \F_{q^6}$ that $\alpha^q + \alpha = \beta^{q+1}$ and $\gamma^{q^2 -  q + 1} = \beta^{q^2} - \beta$, let $P_{\alpha,\beta, \gamma}$ be the common zero of functions $x - \alpha, y - \beta$ and $z - \gamma$.

	\begin{thm}\label{thm4.15}
		Let $q \geq 3$ be a power of prime, There exists a multi-sequence over $\F_{q^6}$ of dimension $q^3 (q+1)^2 (q-1) - 1$, period $q^2 - q + 1$ and linear complexity $q^2 - q + 1$ as well.
	\end{thm}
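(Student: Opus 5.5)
We apply \thmref{thm:mult_seq_framework} with $\sigma\colon (x,y,z)\mapsto (x,y,\mu z)$, where $\mu$ is a primitive $(q^2-q+1)$-th root of unity in $\F_{q^6}$. Such a $\mu$ exists because $q^2-q+1 \mid q^3+1 \mid q^6-1$. Thus $\sigma$ generates $C_{q^2-q+1}$ and has order $m=q^2-q+1$.

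\textbf{Orbits of $\sigma$.} We first record how $\sigma$ acts on the rational places. The function $z$ has a pole only at $P_\infty$, so $\sigma(P_\infty)=P_\infty$. For an affine rational place we have $\sigma^{-1}(z)-\gamma=\mu^{-1}(z-\mu\gamma)$, and the same computation applies to $x-\alpha$ and $y-\beta$. Hence $\sigma$ maps $P_{\alpha,\beta,\gamma}$ to $P_{\alpha,\beta,\mu^{\pm1}\gamma}$; the sign of the exponent does not matter for orbit counting. Consequently a place with $\gamma\neq 0$ has trivial stabilizer in $\igen{\sigma}$, so its orbit has exactly $m$ elements. The places with $\gamma=0$ are the $q^3$ places in $F_0$, namely those with $\beta^{q^2}=\beta$ and $\alpha^q+\alpha=\beta^{q+1}$, and they are fixed by $\sigma$. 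The total number of rational places is $q^8-q^6+q^5+1$. Removing $P_\infty$ and $F_0$ leaves
\[
q^8-q^6+q^5-q^3=q^3(q^5-q^3+q^2-1)=q^3(q^2-q+1)(q+1)^2(q-1)
\]
places, using $q^5-q^3+q^2-1=(q^3+1)(q^2-1)$ and $q^3+1=(q+1)(q^2-q+1)$. These places therefore split into $\ell+1=q^3(q+1)^2(q-1)$ free orbits of length $m$. Equivalently, this many rational places of $F^{\igen\sigma}$ split completely in $F/F^{\igen\sigma}$, which gives the dimension $\ell=q^3(q+1)^2(q-1)-1$ claimed in \thmref{thm4.15}.

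\textbf{Choice of $G$ and degree check.} Take $G=(2g-1)P_\infty$ with $2g-1=q^5-2q^3+q^2-1$. Then $\sigma(G)=G$, and $\supp(G)=\{P_\infty\}$ avoids all the evaluation places. It remains to check the hypotheses $(\ell-1)m\ge 2g$ and $\deg(G)\le(\ell-1)m-1$; the second implies the first. For the second, note that $(\ell-1)m$ is approximately $q^3\cdot q^3\cdot q^2=q^8$, which far exceeds $q^5$. Precisely,
\[
(\ell-1)m=q^8-q^6+q^5-q^3-2(q^2-q+1).
\]
After subtracting $q^5-2q^3+q^2$, the leading term $q^8-q^6$ is positive and dominates the remaining lower-order terms for every $q\ge 2$.

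\textbf{Main obstacle.} The step that needs care is the orbit count: we must confirm that $\gamma=0$ is the only source of nontrivial stabilizers. This holds because $\mu^k\gamma=\gamma$ with $\gamma\ne0$ forces $\mu^k=1$, so $\sigma^k$ is the identity. We must also use the stated count of rational places, in particular that exactly $q^3$ of them have $z=0$; this is the case $c=0$ of the preceding lemma on the sets $F_c$. Once these are in place, \thmref{thm:mult_seq_framework} gives linear complexity $m=q^2-q+1$.
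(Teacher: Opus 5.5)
Your proposal is correct and follows essentially the same route as the paper: take $\sigma\colon (x,y,z)\mapsto(x,y,\mu z)$ generating $C_{q^2-q+1}$, note that the $q^8-q^6+q^5-q^3$ rational places with $\gamma\neq 0$ fall into $q^3(q+1)^2(q-1)$ free orbits of length $q^2-q+1$, set $G=(2g-1)P_\infty$, and invoke \thmref{thm:mult_seq_framework}. Your stabilizer argument, your identification of the $q^3+1$ fixed places ($P_\infty$ and the $q^3$ places with $z=0$), and your explicit check of $\deg(G)\le(\ell-1)m-1$ fill in details that the paper leaves implicit.
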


    \begin{proof}
        Consider the \GK function field $F / \F_{q^6}$, and let $\zeta$ be a primitive $(q^2 - q + 1)$-th root of unity in $\F_{q^6}$. Then the group $C_{q^2 - q + 1}$ is generated by the $\F_{q^6}$-automorphism $\sigma \colon (x,y,z) \mapsto (x,y,\zeta^{-1} z)$. For any rational place $P_{\alpha, \beta, \gamma} $, we have $\sigma (P_{\alpha, \beta, \gamma }) = P_{\alpha, \beta, \zeta \gamma} $. Similarly as in the proof of \thmref{thm:AT_mult_seq}, the set $\{ P_{\alpha, \beta, \gamma} \colon \gamma \neq 0 \} $ splits into $ (q^8 - q^6 + q^5 - q^3)/ (q^2 - q+ 1) = q^3 (q+1)^2 (q-1) $ orbits with length $q^2-q+1$ under the action by $C_{q^2 - q + 1}$. Therefore, there exist $ \ell + 1 = q^3 (q+1)^2 (q-1)$ rational places in the fixed subfield $F^{C_{q^2 - q+ 1} } $, each of which splits completely in the field extension $F / F ^{C_{q^2 - q +1} }$. Put $G = (2 g(F) - 1) P_\infty$. Then $\sigma (G) = G$ and $\deg (G) \leq (\ell -1 )m-1 $. This theorem follows from \thmref{thm:mult_seq_framework}.
    \end{proof}

    \begin{prop}
        The multi-sequence $\bm S$ constructed in the proof of \thmref{thm4.15} satisfies the following statements.
		\begin{enumerate}
			\item Its $\epsilon$-error linear complexity is $q^2 - q + 1$ if $\epsilon \leq q^6+q^5-q^4-2q^3-q^2+2q-1 $.
			\item Its $\epsilon$-error linear complexity is at least $q^2 - q$ if $\epsilon \leq q^6+q^5-q^4-2q^3-q^2+2q+1$.
		\end{enumerate}
    \end{prop}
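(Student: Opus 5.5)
The plan is to apply \thmref{thm:general_eps_err} directly, with the parameters from the proof of \thmref{thm4.15}:
\[
\ell = q^3(q+1)^2(q-1)-1 = q^6+q^5-q^4-q^3-1, \qquad m = q^2-q+1,
\]
\[
\deg(G) = 2g(F)-1 = q^5-2q^3+q^2-1 .
\]
The argument has four steps: compute $\delta_0$, check the degree hypothesis, read off item (1), and evaluate the bounds of items (2) and (3).

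\textbf{Computing $\delta_0$.} This step carries the content. Polynomial division gives
\[
q^5-2q^3+q^2 = (q^3+q^2-2q-2)(q^2-q+1) + 2 ,
\]
that is, $\deg(G)+1 = (q^3+q^2-2q-2)m+2$. Since $m = q^2-q+1 > 2$, the ceiling equals $q^3+q^2-2q-1$. Therefore $\delta_0 = q^3+q^2-2q$.

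\textbf{Degree hypothesis.} Next I check that $2g-1 \le \deg(G) \le (\ell-2)m-\ell+\delta_0-3$. The left inequality is an equality. The right-hand side has order $q^8$, while $\deg(G)$ has order $q^5$, so the inequality holds for $q\ge 3$. A quick crude estimate suffices here.

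\textbf{Item (1).} Item (1) of \thmref{thm:general_eps_err} gives $\LC_\epsilon(\bm S)=m$ whenever
\[
\epsilon \le \ell-\delta_0 = q^6+q^5-q^4-2q^3-q^2+2q-1 ,
\]
which is exactly part (1) of the statement.

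\textbf{Item (2).} Here $\epsilon \le \ell-\delta_0+2$, which is the stated threshold. I will use item (3) of \thmref{thm:general_eps_err}; it covers every $\epsilon \le \ell-\delta_0+2$ and hence also the smaller range of its item (2). The key identity, from the division above, is
\[
(\delta_0-2)m - \deg(G) = -1 .
\]
Write $A = \ell-\delta_0+1$. Splitting $\ell m$ and $(\ell-1)m$ accordingly gives
\[
\chi_2 = m - \frac{1}{A}, \qquad \chi_3 = m - \frac{m+1}{A+2} .
\]
Since $(m+1)A > A+2$, the minimum is $\chi_3$. Hence
\[
\LC_\epsilon(\bm S) \ge m-1-\frac{m+1}{A+2} .
\]
Because $m+1 = q^2-q+2$ is much smaller than $A+2 \approx q^6$, the fraction lies in $(0,1)$. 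Since linear complexity is an integer, we get $\LC_\epsilon(\bm S) \ge m-1 = q^2-q$.

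The only step that needs real care is the division computing $\delta_0$: the remainder $2$ and the resulting identity $(\delta_0-2)m-\deg(G)=-1$ drive all the bounds. Everything after that is routine arithmetic.
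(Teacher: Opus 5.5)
Your proposal is correct and follows the paper's own proof. Both use the same parameters, the same $\delta_0 = q^3+q^2-2q$, read item (1) off directly, and get item (2) from
\[
-1+\chi_2 = m-1-\frac{1}{\ell-\delta_0+1}, \qquad -1+\chi_3 = m-1-\frac{m+1}{\ell-\delta_0+3},
\]
followed by integrality. The differences are cosmetic: you skip the paper's separate $\chi_1$ step by invoking item (3) together with monotonicity in $\epsilon$, and you explicitly check the degree hypothesis, which the paper leaves implicit.
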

	
	\begin{proof} In the setting of \thmref{thm:general_eps_err}, $\ell =q^3 (q+1)^2 (q-1) - 1$, $m = q^2 - q+1$ and $\deg (G) = q^5 - 2q^3 + q^2 - 1$. It follows that
	\[
		\delta_0 = 1 + \ceiling{q^3 + q^2 - 2q - 2 + \frac 2 {q^2 - q + 1}} = q^3 + q^2 - 2q.
	\]
    \begin{enumerate}
        \item If $\epsilon \leq \ell -\delta_0 = q^3 (q+1)^2 (q-1) - (q^3 + q^2 - 2q)-1$, then $\LC_\epsilon (\bm S) = q^2- q+1$.
        \item If $\epsilon \leq \ell - \delta_0 + 1=q^6+q^5-q^4-2q^3-q^2+2q$, then
		\[
			\LC_\epsilon(\bm S) \geq  \ceiling {\chi_1}=\ceiling { q^2 - q - \frac 1{\ell - \delta_0 + 2} }=q^2 - q.
		\]

        If $\epsilon\leq \ell -\delta _0 + 2$, then we obtain
        \begin{align*}
            -1 + \chi_2 = q^2 - q - \frac {1} {\ell - \delta_0 + 1} \ \text{ and } \ -1 + \chi_3 = q^2 - q - \frac {q^2 - q + 2}{\ell - \delta_0 + 3}.
        \end{align*}
        Thus, we have $\LC _\epsilon (\bm S) \geq q^2 - q$ by \thmref{thm:general_eps_err}.
    \end{enumerate}
	\end{proof}

	\section{Explicit construction of quasi-cyclic codes via function fields}	
	\label{sec:QC_codes}

	In this section, we exploit the similarity of \QC codes and periodic multi-sequences, and hence provide many explicit \QC codes from various function fields in \secref{sec:instances} as well.  
	

    Let $\F_q$ be the finite field of $q$ elements. For a positive integer $n$, let $\F_q^n$ denote the linear space of dimension $n$ over $\F_q$.  For a vector $\bm a = (a_1, \dots, a_n) \in \F_q^n$ of length $n$, we introduce an operator defined by $$\mscr T \bm a:= (a_2, a_3, \dots, a_{n}, a_1).$$ For a linear code $C \subseteq \F_q^n$, if every codeword $\bm c \in C$ satisfies $\mscr T \bm c \in C$, then we call $C$ a cyclic code. Generally, if there exists a positive integer $\ell \geq 1$ such that
	\[
	\bm c \in C \implies \mscr T^\ell \bm c \in C,
	\]
	then $C$ is called a quasi-cyclic code. Such an $\ell$ is called the index of the \QC code.
     In particular, if $\ell = 1$, then $C$ is cyclic.
    It is easy to see that for \QC codes of length $n$ and index $\ell$, we must have $\ell \mid n$. We write $n = \ell m$. For convenience, we write a codeword $\bm c$ of the \QC code $C$ as
	\[
	\bm c = (c_{1,1}, c_{1,2}, \dots, c_{1,\ell}, c_{2,1}, \dots, c_{2,\ell}, \dots,  c_{m, 1}, c_{m, 2}, \dots, c_{m, \ell}).
	\]
	
	Due to the cyclicity for \QC codes, we can construct \QC algebraic geometric codes under the assumption of \thmref{thm:mult_seq_framework}.

		\begin{thm}\label{thm5.1}
		Let $F /\F_q$ be a function field of genus $g$. Let $m, \ell $ be two positive integers. Assume the following conditions hold true:
		\begin{enumerate}
			\item there exists an $\F_q$-automorphism $\sigma \in \Aut (F/ \F_q)$ of order $m$;
			\item $G$ is a divisor of $F$ with $2g - 1\leq \deg (G) < \ell m$ and $\sigma (G) = G$;
			\item there exist pairwise distinct rational places $P_1, \dots, P_\ell$ of $F$ such that $P_{i,j} := \sigma^{i-1} (P_{j})$ are pairwise distinct for $j = 1, \dots, \ell$ and $i = 1, \dots, m$; moreover, all of them do not belong to $\Supp (G)$.
		\end{enumerate}
		For the list
		$\mscr D = \{P_{1,1}, P_{1,2}, \dots, P_{1,\ell}, P_{2,1}, \dots, P_{2,\ell}, \dots, P_{m,1}, \dots P_{m, \ell}\}$,
		the algebraic geometry code $C = C (\mscr D,G)$ associated to $\mscr D$ and $G$ is a $q$-ary $[\ell m, \deg (G) - g + 1, \geq \ell m - \deg (G)]$ quasi-cyclic code of index $\ell$.  \label{thm:framework}
	\end{thm}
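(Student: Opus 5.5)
The plan has two parts. First I get the parameters $[\ell m, \deg(G)-g+1, \geq \ell m-\deg(G)]$ from the standard theory of algebraic geometry codes. Then I show that the shift $\mscr T^\ell$ preserves $C$, using the action of $\sigma$ on $\mcal L(G)$.

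For the parameters, consider the evaluation map $\mathrm{ev}\colon \mcal L(G)\to \F_q^{\ell m}$, $f\mapsto (f(P_{1,1}),\dots,f(P_{m,\ell}))$. It is well defined because no $P_{i,j}$ lies in $\Supp(G)$, so each $f\in\mcal L(G)$ lies in the valuation ring at every $P_{i,j}$. The kernel is $\mcal L(G-\sum_{i,j}P_{i,j})$, and this divisor has degree $\deg(G)-\ell m<0$, so the kernel is zero. Hence $\dim C=\dim\mcal L(G)$, which equals $\deg(G)-g+1$ by the Riemann--Roch theorem, since $\deg(G)\geq 2g-1$. Now take a nonzero codeword $\mathrm{ev}(f)$ with $w$ zero coordinates. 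Then $0\neq f\in\mcal L(G-\sum_{\text{zeros}}P_{i,j})$, so $\deg(G)-w\geq 0$. This gives minimum distance at least $\ell m-\deg(G)$.

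For quasi-cyclicity, note first that $\mscr T^\ell$ moves each block of $\ell$ coordinates one block to the left, cyclically. Thus for $\bm c=\mathrm{ev}(f)$, the vector $\mscr T^\ell\bm c$ has $(i,j)$-entry $f(P_{i+1,j})=f(\sigma^{i}(P_j))$, with indices taken modulo $m$; this uses $\sigma^m=\mathrm{id}$, so $P_{m+1,j}=P_{1,j}$. As in the proof of \thmref{thm:mult_seq_framework}, \cite[Lemma 3.5.2]{GTM254} gives
\[ f(\sigma(P))=\bigl(\sigma^{-1}(f)\bigr)(P). \]
Therefore $\mscr T^\ell\bm c=\mathrm{ev}(\sigma^{-1}(f))$. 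Since $\sigma(G)=G$, we have $\sigma^{-1}(\mcal L(G))=\mcal L(\sigma^{-1}(G))=\mcal L(G)$ by \cite[Lemma 1]{JMXZ2025}. So $\sigma^{-1}(f)\in\mcal L(G)$, and hence $\mscr T^\ell\bm c\in C$.

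The only delicate point is this index bookkeeping: checking that with the ordering of $\mscr D$ the shift by $\ell$ corresponds to $i\mapsto i+1$, and that the direction of $\sigma$ versus $\sigma^{-1}$ is handled consistently. The direction is harmless anyway, because $\mcal L(G)$ is stable under both $\sigma$ and $\sigma^{-1}$.
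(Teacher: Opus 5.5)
Your proposal is correct and is essentially the standard argument that the paper omits and defers to \cite[Theorem 2.4]{BDG2026}. The parameters come from the usual Riemann--Roch and Goppa-bound reasoning for $C(\mscr D,G)$, which uses that the $P_{i,j}$ are pairwise distinct and avoid $\Supp(G)$. Quasi-cyclicity follows because $\mscr T^\ell$ sends $\mathrm{ev}(f)$ to $\mathrm{ev}(\sigma^{-1}(f))$ with $\sigma^{-1}(f)\in\mcal L(G)$, and your index bookkeeping, including the wrap-around $P_{m+1,j}=P_{1,j}$ from $\sigma^m=\mathrm{id}$, is correct.
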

	
	The proof of \thmref{thm5.1} is almost the same with \cite[Theorem 2.4]{BDG2026}, so we omit it.
	
	Now we can apply this theorem to function fields considered in \secref{sec:instances}. By considering the cyclotomic function fields, we obtain the following quasi-cyclic codes.
	
		\begin{cor}
		Let $q \geq 4$ be a power of prime. Let $3 \leq d \leq q-2$ be an odd integer, and let $\psi(d) = (q^d + 1) / (q+1)$. Choose an integer $t$ such that $(d - 2)(\psi (d) - 1) \leq td \leq q^d $, then there exists a $q^2$-ary \QC  code of index $q+1$ with parameters $[q^d + 1, td + 1 - (d - 2)(\psi (d) - 1) / 2,\geq  \! q^d + 1 - td]$.
	\end{cor}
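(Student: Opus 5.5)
We apply \thmref{thm5.1} to the function field $M=F^A$ of \propref{prop:Spl_Cycl_Fun_Fld}, with $F=K(\Lambda_{p(x)})$ and $p(x)\in\F_{q^2}[x]$ irreducible of degree $d$. The automorphism is a generator $\sigma$ of $\Gal(M/K)$, which is cyclic of order $m=\psi(d)$ because $\Gal(F/K)\cong(\F_{q^2}[x]/(p(x)))^\ast$ is cyclic. The index is $\ell=q+1$. For the evaluation places we use the $q+1$ rational places of $K$ that split completely in $M/K$, namely $\infty$ and $P_\alpha$ for $\alpha\in\F_q$. For each of them we pick one place of $M$ lying above it and call these $P_1,\dots,P_{q+1}$. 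Each splits into exactly $\psi(d)=[M:K]$ rational places, and $\Gal(M/K)=\igen\sigma$ acts transitively on the fibre. Hence the places $\sigma^{i-1}(P_j)$ are pairwise distinct for $1\le i\le m$, $1\le j\le q+1$, and there are $(q+1)\psi(d)=q^d+1$ of them. This is the code length.

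For the divisor we take $G=tQ$, where $Q$ is the unique place of $M$ over the zero $P$ of $p(x)$, exactly as in the proof of \propref{prop:mult_seq_construct_cycl}. Since $P$ is totally ramified in $M/K$, we have $\sigma(Q)=Q$, so $\sigma(G)=G$. Also $\deg Q=\deg P=d$, so $\deg G=td$. The support of $G$ is $\{Q\}$, and $Q$ lies over $P$, which is neither $\infty$ nor any $P_\alpha$, because $d\ge 3$ means $p(x)$ is not linear. So the support avoids all evaluation places. By \propref{prop:Spl_Cycl_Fun_Fld}, $2g(M)-2=(d-2)(\psi(d)-1)-2$. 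Thus the hypothesis $td\ge (d-2)(\psi(d)-1)$ gives $td\ge 2g(M)-1$ with room to spare, and $td\le q^d<q^d+1=\ell m$. So all conditions of \thmref{thm5.1} hold.

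\thmref{thm5.1} then produces a $q^2$-ary quasi-cyclic code of index $q+1$ with length $q^d+1$. Its minimum distance is at least $q^d+1-td$. Its dimension is
\[
td-g(M)+1 = td+1-\frac{(d-2)(\psi(d)-1)}{2}.
\]
The only points needing care are the cyclicity of $\Gal(M/K)$, which holds as a quotient of a cyclic group, and the exact distinctness count. Neither is a real obstacle.

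The stated range $3\le d\le q-2$ with $q\ge 4$ is more than the argument uses. Presumably it only ensures that an admissible $t$ exists, and it plays no further role.
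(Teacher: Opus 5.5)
Your proposal is correct and is the argument the paper intends; the paper writes out no separate proof, and the corollary comes from applying \thmref{thm5.1} to $M=F^A$. The ingredients are the same as yours: $\sigma$ a generator of the cyclic group $\Gal(M/K)$ of order $\psi(d)$, orbit representatives above the $q+1$ completely split places $\infty$ and $P_\alpha$ ($\alpha\in\F_q$), and $G=tQ$ exactly as in \propref{prop:mult_seq_construct_cycl}. One point that both you and the paper leave implicit is that $p(x)$ must be taken (up to a scalar) in $\F_q[x]$, which is possible because $d$ is odd; otherwise the image of $\F_q[x]$ fills all of $(\F_{q^2}[x]/(p(x)))^\ast$, so $A$ is the whole Galois group and $M=K$.
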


	By considering maximal function fields with the second largest genus, we can construct the following \QC code.

	\begin{cor}
		Let $q \geq p^2$ be a power of a prime integer $p$, and let $m$ be a positive integer that divides $p (q+ 1 ) / 2$, and let $k$ be an integer such that $(q-1)^2 / 2 \leq k < q(q^2- 1)/2$. Then there exists a $q^2$-ary \QC code of index $(q^2 - q)/(2m)$ with parameters $[q (q^2 - 1) / 2, k - \gauss{
		(q-1)^2 / 4} + 1, \geq q(q^2-1)/2 - k]$.
	\end{cor}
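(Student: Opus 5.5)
We apply \thmref{thm5.1} to the function field $X_2=\F_{q^2}(x,y)$ with $y^q+y=x^{(q+1)/2}$, reusing the automorphism $\sigma$ from the proof of \thmref{thm:FGT_mult_seq}. Recall that $\sigma(x)=\zeta^{-2(q-1)}x$ and $\sigma(y)=y-c$, where $0\neq c$ satisfies $c^q+c=0$, and that $\sigma$ has order $p(q+1)/2$. Since $m\mid p(q+1)/2$, set $\tau:=\sigma^{p(q+1)/(2m)}$; this is an $\F_{q^2}$-automorphism of order exactly $m$, which gives condition (1) of \thmref{thm5.1}.

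For condition (3), take the evaluation set $\{P_{\alpha,\beta}\colon \alpha\neq 0\}$. It has $q(q^2-1)/2$ rational places, namely $(q^2-1)/2$ admissible values $\alpha\neq0$, each with $q$ values of $\beta$. The key point, and the step that needs the most care, is that $\langle\sigma\rangle$ acts freely on this set. From the formulas we have
\[
\sigma^{j}(P_{\alpha,\beta})=P_{\zeta^{2j(q-1)}\alpha,\,\beta+jc}
\]
(up to the sign convention for how $\sigma$ acts on places). So $\sigma^j$ fixes $P_{\alpha,\beta}$ with $\alpha\neq0$ if and only if two conditions hold:
\begin{itemize}
\item $\zeta^{2j(q-1)}=1$, which is equivalent to $(q+1)/2\mid j$, since $\zeta^{2(q-1)}$ has order $(q+1)/2$;
\item $jc=0$, which is equivalent to $p\mid j$.
\end{itemize}
Because $\gcd(p,(q+1)/2)=1$, these together force $p(q+1)/2\mid j$. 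Hence every $\langle\sigma\rangle$-orbit in this set has full length $p(q+1)/2$. Restricting to the subgroup $\langle\tau\rangle$, the action is still free, so the set splits into $q(q^2-1)/(2m)$ orbits of length $m$. Choose one representative $P_1,\dots,P_{\ell}$ from each orbit. Then the places $P_{i,j}=\tau^{i-1}(P_j)$ are pairwise distinct, and there are $n=\ell m=q(q^2-1)/2$ of them.

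For condition (2), take $G=kP_\infty$. By \cite[Proposition 3.1]{HM2025}-type reasoning, or directly from the fact that $\sigma$ maps the pole of $x,y$ to itself, we get $\tau(P_\infty)=P_\infty$, so $\tau(G)=G$ and $P_\infty\notin\mathscr D$. Since $g(X_2)=(q-1)^2/4=\gauss{(q-1)^2/4}$, we have $2g-1=(q-1)^2/2-1<k$. The hypothesis $k<q(q^2-1)/2=\ell m$ gives the upper bound. Now \thmref{thm5.1} yields a $q^2$-ary quasi-cyclic code with parameters
\[
\bigl[q(q^2-1)/2,\;k-\gauss{(q-1)^2/4}+1,\;\geq q(q^2-1)/2-k\bigr].
\]

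It remains to match the stated index. In the paper's convention the index is the number $\ell$ of $\tau$-orbits, namely $q(q^2-1)/(2m)$. I would double-check this against the stated value $(q^2-q)/(2m)$, normalizing the count as needed, since the two expressions appear to differ by a factor of $q+1$. The only genuinely nontrivial step is the freeness of the orbit action computed above; everything else is a direct verification of the hypotheses of \thmref{thm5.1}.
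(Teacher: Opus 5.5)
Your route is the intended one. The paper gives no separate argument for this corollary: it is \thmref{thm5.1} applied with a power of the automorphism from Section 4 and $G=kP_\infty$. For odd $p$ your verification is complete and correct:
\begin{itemize}
\item $\tau=\sigma^{p(q+1)/(2m)}$ has order $m$;
\item your formula $\sigma^j(P_{\alpha,\beta})=P_{\zeta^{2j(q-1)}\alpha,\beta+jc}$, together with $\gcd(p,(q+1)/2)=1$, shows that the action on the $q(q^2-1)/2$ places with $\alpha\neq 0$ is free;
\item $2g-1=(q-1)^2/2-1<k<q(q^2-1)/2$.
\end{itemize}
On the index, do not try to ``normalize''. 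The number of $\tau$-orbits, $q(q^2-1)/(2m)$, is the correct index, and the stated $(q^2-q)/(2m)$ is a misprint that drops a factor $q+1$. In \thmref{thm5.1} the index $\ell$ satisfies $\ell m=n$, and that fails for $(q^2-q)/(2m)$. For $m=p(q+1)/2$ the stated value $q(q-1)/(p(q+1))$ is not even an integer, because $\gcd(q+1,q(q-1))\leq 2$. The other corollaries also have index equal to the number of orbits. So you should say this explicitly and prove the corrected statement, rather than leave the discrepancy open.

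The genuine omission is the case $p=2$, which the hypotheses allow. The floor in $\lfloor (q-1)^2/4\rfloor$ suggests it is meant to be included, since $g(Y_2)=q(q-2)/4=\lfloor (q-1)^2/4\rfloor$. Your field $X_2$ exists only for odd $q$, because of the exponent $(q+1)/2$. For $q=2^s\geq 4$ you must run the same argument on $Y_2\colon x^{q+1}=\sum_{j=1}^{s}y^{q/2^j}$:
\begin{itemize}
\item Take $\sigma(x)=\zeta^{1-q}x$ and $\sigma(y)=y+c$, where $c\neq 0$ and $\sum_{j=1}^{s}c^{q/2^j}=0$. This $\sigma$ has order $2(q+1)$, so $m\mid q+1$ makes $\tau=\sigma^{2(q+1)/m}$ an automorphism of order $m$.
\item $\zeta^{q-1}$ has odd order $q+1$ and $c$ has additive order $2$. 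Hence $\sigma^j$ fixes some $P_{\alpha,\beta}$ with $\alpha\neq 0$ only if $2(q+1)\mid j$, so the action on the $(q^2-1)\cdot q/2=q(q^2-1)/2$ such places is again free.
\item $P_\infty$ is fixed because it is the unique pole of $x$.
\item $2g(Y_2)-1=(q^2-2q-2)/2<(q-1)^2/2\leq k$.
\end{itemize}
With this case added and the index corrected, the proof is complete.
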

	
	By considering the Suzuki function field, we can obtain the following \QC code.
	
	\begin{cor}
		Let $q_0\geq 2$ be a power of $2$, and $q = 2 q_0^2$. Let $ k$ be an integer such that $2q_0 - 2q_0 - 1  \leq k \leq q^2 - 2 $. Then there exists a $q$-ary \QC code of index $q+1$ with parameters $[q^2 - 1, k - q_0 (q-1) + 1, \geq\! (q^2 - 1) - k ]$.
	\end{cor}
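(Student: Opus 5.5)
The plan is to apply \thmref{thm5.1} to the Suzuki function field $F=\F_q(x,y)$, $y^q-y=x^{2q_0}(x^q-x)$, reusing the automorphism from the Suzuki multi-sequence construction in \secref{sec:Suzuki}. I read the lower bound on $k$ in the statement as the intended $2g(F)-1=2q_0(q-1)-1\le k$; the printed expression appears to be a typo for this. I take
\[
\sigma(x)=\zeta x,\qquad \sigma(y)=\zeta^{2q_0+1}y,
\]
with $\zeta$ a generator of $\F_q^\ast$. This $\sigma$ lies in the decomposition group $\Aut(F/\F_q)_\infty$ (parameters $a=\zeta$, $b=c=0$), so it is an automorphism. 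It has order $m=q-1$ and fixes $P_\infty$ and $P_{0,0}$. I then set $G=kP_\infty$, which is $\sigma$-invariant, and $\ell=q+1$.

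The first real step is to check that $\sigma$ acts freely on the $q^2-1$ rational places $P_{\mu,\nu}$ with $(\mu,\nu)\neq(0,0)$. Since $\sigma$ rescales the coordinates, $\sigma^i(P_{\mu,\nu})=P_{\zeta^{\mp i}\mu,\ \zeta^{\mp i(2q_0+1)}\nu}$, with the sign fixed by the convention for the action on places; the sign does not affect the argument.

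If $\mu\neq0$, then $\sigma^i$ fixes $P_{\mu,\nu}$ only when $\zeta^i=1$, i.e.\ $(q-1)\mid i$. If $\mu=0$, then $\nu\neq0$, and $\sigma^i$ fixes $P_{0,\nu}$ only when $\zeta^{i(2q_0+1)}=1$. So I need $\gcd(2q_0+1,q-1)=1$. This follows from
\[
2(q-1)=4q_0^2-2=(2q_0+1)(2q_0-1)-1,
\]
which forces $\gcd(2q_0+1,\,2(q-1))=1$.

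Hence all orbits on these places have length $q-1$, and there are exactly $(q^2-1)/(q-1)=q+1$ of them. Choosing one representative $P_1,\dots,P_{q+1}$ from each orbit gives the pairwise distinct places $P_{i,j}=\sigma^{i-1}(P_j)$ required in condition (3) of \thmref{thm5.1}. None of them lies in $\Supp(G)=\{P_\infty\}$.

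It remains to check condition (2) of \thmref{thm5.1}. Here $\deg G=k$ and $\ell m=q^2-1$, so I need $2g-1\le k<q^2-1$, which is exactly the hypothesis $2q_0(q-1)-1\le k\le q^2-2$. \thmref{thm5.1} then produces a $q$-ary quasi-cyclic code of index $q+1$ and length $q^2-1$, with dimension $k-g+1=k-q_0(q-1)+1$ and minimum distance at least $q^2-1-k$.

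The main obstacle is the freeness of the action on the places $P_{0,\nu}$, which rests on the coprimality computation above. Everything else is bookkeeping: matching the hypotheses of \thmref{thm5.1} and noting that the code has length $q^2-1$.
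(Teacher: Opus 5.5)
Your proposal is correct and follows the paper's intended route. You apply \thmref{thm5.1} to the Suzuki function field with the automorphism $\sigma(x)=\zeta x$, $\sigma(y)=\zeta^{2q_0+1}y$ of \secref{sec:Suzuki}, take $G=kP_\infty$, and use the $q+1$ free orbits of length $q-1$ on the affine rational places other than $P_{0,0}$. You also correctly read the lower bound as $2q_0(q-1)-1=2g-1$. Your check $\gcd(2q_0+1,q-1)=1$ supplies the orbit-length detail that the paper only asserts ``by a similar argument.''
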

	
	Finally, by working with the \GK function field, we can construct the following \QC code.
	
	\begin{cor}
		Let $q \geq 3$ be a power of a prime, and $k$ be an integer satisfying $ q^5 - 2q^3 + q^2 - 1 \leq k \leq q^3(q^3+1)(q^2 - 1) - 1$. Then there exists a $q^6$-ary \QC code of index $q^3 (q+1)^2 (q-1)$ with parameters $[q^3(q^3+1)(q^2 - 1), k + 1 - (q^5 - 2q^3 + q^2)/2, \geq \! q^3(q^3+1)(q^2 - 1) - k ]$.
	\end{cor}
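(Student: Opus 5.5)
The plan is to apply \thmref{thm5.1} to the \GK function field $F/\F_{q^6}$, using the same automorphism, evaluation places and divisor as in the proof of \thmref{thm4.15}, but now letting the multiple of $P_\infty$ vary.

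First, the automorphism and the place $P_\infty$. Let $\zeta\in\F_{q^6}$ be a primitive $(q^2-q+1)$-th root of unity, and let $\sigma\colon (x,y,z)\mapsto (x,y,\zeta^{-1}z)$. Then $\sigma$ generates $C_{q^2-q+1}$, so condition (1) holds with $m=q^2-q+1$. Since $\sigma$ maps $x,y,z$ to nonzero constant multiples of themselves, it preserves the common pole of $x,y,z$. Hence $\sigma(P_\infty)=P_\infty$, and $G:=kP_\infty$ satisfies $\sigma(G)=G$.

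Second, the evaluation places. For $\gamma\neq 0$ we have $\sigma(P_{\alpha,\beta,\gamma})=P_{\alpha,\beta,\zeta\gamma}$. Because $\zeta$ is primitive of order $m$, the places $P_{\alpha,\beta,\zeta^{i}\gamma}$ for $0\le i\le m-1$ are pairwise distinct. So every orbit of $\langle\sigma\rangle$ on $\{P_{\alpha,\beta,\gamma}\colon\gamma\neq0\}$ has length exactly $m$, and none of these places lies in $\Supp(G)=\{P_\infty\}$.

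Third, counting these places; this is the step I expect needs the most care. The rational places other than $P_\infty$ with $\gamma=0$ form the set $F_0$ of the preceding lemma. Since $0\in\Gamma\cup\{0\}$, the lemma says $F_0$ consists of exactly $q^3$ rational places. One can also check this directly: $\beta^{q^2}=\beta$ forces $\beta\in\F_{q^2}$, and for each such $\beta$ the equation $\alpha^q+\alpha=\beta^{q+1}\in\F_q$ has $q$ solutions. Subtracting $P_\infty$ and these $q^3$ places from the total $q^8-q^6+q^5+1$ leaves $q^8-q^6+q^5-q^3$ places. Dividing by $m$ gives
\[
\ell=\frac{q^8-q^6+q^5-q^3}{q^2-q+1}=q^3(q+1)^2(q-1)
\]
orbits, as in the proof of \thmref{thm4.15}. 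Choosing one representative $P_j$ from each orbit gives condition (3).

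Finally, the parameters. The length is
\[
\ell m=q^3(q+1)^2(q-1)(q^2-q+1)=q^3(q^3+1)(q^2-1),
\]
using $(q+1)(q^2-q+1)=q^3+1$ and $(q+1)(q-1)=q^2-1$. The genus is $g=(q^5-2q^3+q^2)/2$, so $2g-1=q^5-2q^3+q^2-1$. The hypothesis on $k$ therefore gives exactly $2g-1\le\deg G<\ell m$, which is condition (2). \thmref{thm5.1} then yields a \QC code of index $\ell$, length $q^3(q^3+1)(q^2-1)$, dimension $k-g+1=k+1-(q^5-2q^3+q^2)/2$, and minimum distance at least $q^3(q^3+1)(q^2-1)-k$, as claimed.
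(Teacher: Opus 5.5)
Your proposal is correct and follows the route the paper intends. The paper gives no written proof of this corollary; it applies \thmref{thm5.1} with the automorphism $\sigma$ generating $C_{q^2-q+1}$, the $q^3(q+1)^2(q-1)$ orbits of length $q^2-q+1$ from the proof of \thmref{thm4.15} (now all used, since no orbit is reserved for $R$), and $G=kP_\infty$. Your parameter checks, namely $\ell m=q^3(q^3+1)(q^2-1)$, $2g-1=q^5-2q^3+q^2-1$, and the direct count of the $q^3$ places with $\gamma=0$, are all sound.
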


\end{document}